\newcommand{\stkout}[1]{\ifmmode\text{\sout{\ensuremath{#1}}}\else\sout{#1}\fi}
\newtheorem{thm}{Theorem}
\newtheorem{lemma}[thm]{Lemma}
\newtheorem{prop}[thm]{Proposition}
\newtheorem{cor}[thm]{Corollary}
\theoremstyle{remark}
\newtheorem{remark}[thm]{Remark}
\theoremstyle{definition}
\numberwithin{thm}{section}
\numberwithin{equation}{section}
\definecolor{green}{rgb}{0.0, 0.5, 0.5}
\definecolor{yellow}{rgb}{0.5, 0.5, 0}
\definecolor{lgray}{gray}{0.9}
\definecolor{llgray}{gray}{0.95}
\definecolor{lllgray}{gray}{0.975}
\newcommand{\red}{\color{red}}
\newcommand{\nc}{\newcommand}
\newcommand{\R}{\mathbb{R}}
\nc{\al}{\alpha}
\nc{\be}{\beta}
\nc{\G}{\Gamma}
\nc{\et}{\eta} 
\nc{\g}{\gamma}
\nc{\gam}{\gamma}
\nc{\ka}{\kappa}
\nc{\lam}{\lambda}
\nc{\Lam}{\Lambda}
\nc{\Om}{\Omega}
\nc{\om}{\omega}
\nc{\ta}{\tau}
\nc{\w}{\omega}
\nc{\io}{\iota}
\nc{\z}{\zeta}
\nc{\s}{\sigma}
\nc{\Si}{\Sigma}
\nc{\vphi}{\varphi}
\nc{\e}{\epsilon}
\nc{\bP}{\bar{P}}
\nc{\bQ}{\bar{Q}}
\nc{\ran}{\rangle}
\nc{\lan}{\langle}
\newcommand{\ra}{\rightarrow}
\newcommand{\ls}{\lesssim}
\newcommand{\re}{\operatorname{Re}}
\newcommand{\im}{{\rm Im}}
\newcommand{\dist}{\mathrm{dist}}
\nc{\bfone}{{\bf 1}}
\newcommand{\p}{\partial}
\newcommand{\n}{\nabla}
\newcommand{\DETAILS}[1]{}
\newcommand{\x}{\lan x\ran}
\newcommand{\dx}{d_X (x)}
\let\process@citelist\process@citelist@unsorted
\begin{document}

\rightline {\small \emph{Published in}}  
\rightline{\small \emph{Lett Math Physics, 2025}}
\bigskip

\title[QI propagation]{On Propagation of Information in Quantum Mechanics and Maximal Velocity Bounds}
\author[I.~M.~Sigal]{Israel Michael Sigal}
	\address{Department of Mathematics, University of Toronto, Toronto, ON M5S 2E4, Canada }
	\email{im.sigal@utoronto.ca}
	
	\author[X.~Wu]{Xiaoxu Wu}
	\address{Mathematical Sciences Institute, Australian National University, Acton ACT 2601, Australia}
		\email{Xiaoxu.Wu@anu.edu.au}

	\date{\today}
	\subjclass[2020]{35Q40   (primary); 81P45   (secondary)}
	\keywords{Maximal propagation speed; Lieb-Robinson bounds; quantum mechanics; quantum information; quantum light cones,  quantum observables quantum states}
	
	\maketitle

	\begin{abstract}
	We revisit key notions related to the evolution of quantum information in few-body quantum mechanics {(fbQM)} and, for a wide class of dispersion relations, prove uniform bounds on the maximal speed of propagation of quantum information for states and observables with exponential error bounds. Our results imply, in particular, a {fbQM} version of the Lieb-Robinson bound, which is known to have wide applications in quantum information sciences. We propose a novel approach to proving maximal speed bounds.

	\end{abstract}

	\section{Introduction}
	\subsection{Problem and results} The study of evolution of information in condensed matter physics is an active, robust area of research with many profound results. At the same time, perhaps due to the difficulty of experimental implementation, with the exception of a few works on quantum open systems (\cites{Breteaux_2022, Breteaux_2023}), this fundamental issue was not tackled in the original setting of quantum mechanics, i.e. at zero particle density. In this paper, we address this subject in a systematic way.

Investigation of propagation of quantum information has begun in the context of condensed matter physics with the discovery ( {  \cites{H1, H0,BHV, EisOsb,H3,NachOgS, NachS,HastKoma,H2}}) that the Lieb-Robinson bound obtained for lattice spin systems in statistical mechanics can be used to derive general constraints on propagation of quantum information. Time bounds on quantum messaging, creation and propagation of corrections and entanglement, state transport and control, quantum simulation algorithms, belief propagation raised in these papers were improved and extended significantly in  \cites{BD, BH, CL, DefenuEtAl, EisOsb, EldredgeEtAl, EpWh, FaupSig, KGE, KS1, NachSchlSSZ, NSY2,	NachVerZ, ElMaNayakYao,FLS1,FLS2,Fossetal,GebNachReSims, LRSZ,GEi, KuwLem2024,KuwSaito,KuwSaito1,TEtal4, KVS, MatKoNaka,NRSS,NachSim,SHOE,SZ,WH,YL,  Pou,RS, TranEtAl3,	TranEtAl1,TranEtal5}, see the survey papers \cites{Bose,	GEi, NachSim} and brief reviews in \cites{FLS1, FLS2, KuwSaito1}. \par

A different approach was introduced in \cite{SigSof2} and extended in \cites{AFPS,APSS,BoFauSig, FrGrSchl, HeSk, HunSigSof, SchSurvey, Sig, Skib}. Dealing originally with scattering theory in quantum mechanics, it was extended to many-body systems proving light-cone bounds on the propagation in bose gases (\cites{FLS1, FLS2,LRSZ, SZ}), the problem which was open since the groundbreaking work of Lieb and Robinson (\cite{LR}).\par



In this paper, we consider the evolution of states and observables in the quantum-mechanical context and prove the uniform maximal velocity bound with exponential tails yielding existence of the effective light cone (LC) (modulo exponentially small leakage) in quantum mechanical systems. In particular, we prove the Lieb-Robinson bound implying 
 the simultaneous measurability of evolving observables as long as their light cones do not intersect. 


\subsection{Setup}Consider a quantum system with a state space $\mathcal H$ and a Hamiltonian $H$, a self-adjoint operator on $\mathcal H$. We suppose that $\mathcal{H}=L^2(\Lam)$, where $\Lam$ is either $\mathbb{R}^n$ or $\mathbb Z^n$ or a bounded subset (box) in $\mathbb{R}^n$ or $\mathbb Z^n$. To fix ideas, in what follows, we take $\Lambda=\mathbb R^n$. 
\par


A specific operator $H$ we have in mind is the Schr\"odinger-type operator  
 \begin{align} \label{H}H=\om(p)+V(x),\end{align} 
 where $\om(k)$ is a real smooth positive function, $p:=-i\n$ is the momentum operator and the potential $V(x)$ is a real function s.t. $H$ is self-adjoint on the domain of $\om(p)$, i.e. $V(x)$ is $\om(p)$-bounded with the relative bound $<1$. We will also require certain analytic properties for the dispersion law $\omega(k)$ which can be interpreted as $\omega(k)$ having effectively bounded group velocity.

 \par
 
 
One can recognize information by the properties that it is transmittable, deletable\footnote{One should be able to erase parts of information one processes (i.e. irrelevant or inaccessible parts)}, localizable and measurable (or at least detectable).


For quantum mechaniccal systems, the second property requires extending the state space $\mathcal H$ to the space, $S_1^+$, of positive, trace-class operators, $\rho$, acting on $\mathcal H$. The original state space $\mathcal H$ is identified with the subspace of rank one projections. Quantum information related to a given system is encoded in density operators describing it. \par


The evolution of density operators, is given by the von Neumann equation (vNE) (here and in the rest of this section we set $\hbar=1$) 
 \begin{align}\label{vNeq}
	&\frac{\partial\rho_t}{\partial t}=-i[H,\rho_t],\ \quad \text{ with }\ \rho_{t=0}=\rho_0.\end{align}
This equation preserves the rank of projections and, when reduced to the rank-one projections, is equivalent to the Schr\"odinger equation.\par


The initial value problem~\eqref{vNeq} has a unique solution which generates the automorphism on $S_1^+$:
\begin{equation}
    \alpha'_t(\rho):=e^{-iHt} \rho e^{iHt}.\label{IVP}
\end{equation}
This formula allows one to reduce many results on propagation of states and observables to estimates on the Schr\"odinger evolution $e^{-iHt}$.


However, the framework of the vNE is much broader and entails a different take on the evolution problem (i.e. the semiclassics for the vNE leads to Liouville's equation, rather than Newton's one), and it is foundational for the theory of open quantum dynamics.


\subsection{MVB}We begin with a key result concerning the propagator $e^{-itH}$, which implies a variety of bounds on propagation of quantum information.\par

In what follows, we always assume (without specifying this) that the quantum Hamiltonian $H$ we deal with is self-adjoint.  For self-adjoint operators on $L^2(\mathbb R^n)$, we use the notation $\sup A=\sup\limits_{u\in D(A), \|u\|=1} \langle u, Au\rangle$. For families of bounded operators, we use the notion of analyticity in the sense of Kato (resolvent sense, see \cite{RS4}, Section XII). \par


Let $T_\xi$ be the unitary operator of multiplication by the function $e^{-i\xi\cdot x}$. For a self-adjoint operator $H$ on $L^2(\R^n)$, we introduce the operator family $H_\xi:=T_\xi H T_\xi^{-1}$. 
Let $ S_a^n=\{\zeta=(\zeta_1,\cdots,\zeta_n)\in \mathbb C^n\, :\, \, |\im z_j|<a \quad \forall\, j\}. $ Now, we assume 
\begin{itemize}[label={},ref={}]
\item[(A)]\label{asp: An} The family $H_\xi:=T_\xi HT_\xi^{-1}$, $\xi\in \mathbb R^n$, has an analytic continuation in $\xi$ from $\mathbb R^n$ to $\mathcal S_a^n$ in the sense of Kato and for this continuation $\im H_\zeta=\frac{1}{2i}(H_\zeta-H_\zeta^*),\, \zeta\in \mathcal S_a^n$, are bounded operators.
\end{itemize}
\bigskip


We fix $\mu\in (0,a)$ and define the number  
\begin{equation}\label{c}
  c:=\sup\limits_{\xi\in \mathbb R^n,\, b\in \mathcal S^{n-1}} \sup(\im H_{\xi+i\mu b})/{\mu}\    { =\sup\limits_{y\in \mathcal \R^{n}, |y|=\mu} \sup(\im {H}_{iy})/{\mu}}.
\end{equation}


In what follows, $X$ and $Y$ denote open subsets of $\R^n$, $X^c:=\R^n- X$, $d_{XY}$, the distance between $X$ and $Y$ and $\chi_{X}$, the characteristic function of  $X$, as well as  the operator of multiplication by this function. Moreover, depending on the context, $\|\cdot\|$ stands either for the norm in $L^2(\mathbb R^n)$, or the operator norm on $L^2(\mathbb R^n)$. We have the following result:  
  
\begin{thm}[Light cone (maximal propagation velocity) bound] \label{thm:MVB}
Let Condition~(A) hold and let $\mu\in (0,a)$. Then,  for any $\mu'\in (0,\mu)$ and for any two disjoint sets $X$ and $Y$ in $\R^n$, we have, 
\begin{equation} \label{MVB} 
 \|\chi_{_{X}}\,e^{-iH t} \chi_{_{Y}}\|\leq \, C  {e^{-\mu' (d_{XY}- c'|t|)}}, 
 \end{equation} 
 where $c'=\frac{\mu c}{\mu'}$, with $c$ given in Eq.~\eqref{c}, 
 and $C>0$ is a constant depending on $\frac{\mu}{\mu'}-1,\ \mu$ 
 and $n$. 
    \end{thm}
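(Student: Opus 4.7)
The strategy is a Combes--Thomas/Agmon-style conjugation argument that exploits Condition~(A) to produce an exponential weight, combined with a geometric reduction from a hyperplane-separated configuration to a general pair of disjoint sets.

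The first ingredient is a semigroup estimate for the tilted Hamiltonian. For $\eta\in\R^n$ with $|\eta|\le\mu$, Condition~(A) supplies the analytic continuation $H_{i\eta}:=T_{i\eta}HT_{i\eta}^{-1}$, formally $e^{\eta\cdot x}He^{-\eta\cdot x}$, with $\mathrm{Im}\,H_{i\eta}$ bounded. Writing $-iH_{i\eta}=-i\,\mathrm{Re}\,H_{i\eta}+\mathrm{Im}\,H_{i\eta}$, differentiating $\|e^{-iH_{i\eta}t}\psi\|^{2}$ in $t$ and applying Gr\"onwall's inequality give
$$
\bigl\|e^{-iH_{i\eta}t}\bigr\|\;\le\;\exp\bigl(|t|\sup\mathrm{Im}\,H_{i\eta}\bigr),
$$
which for $|\eta|=\mu$ reduces to $\|e^{-iH_{i\eta}t}\|\le e^{\mu c|t|}$ by \eqref{c}. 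Paired with the conjugation identity $e^{\eta\cdot x}e^{-iHt}e^{-\eta\cdot x}=e^{-iH_{i\eta}t}$, which holds on a dense domain and extends by analyticity, this already disposes of the case in which $X,Y$ are separated by a hyperplane with unit normal $b$ and gap $d=d_{XY}$: the choice $\eta=\mu b$ yields
$$
\|\chi_X e^{-iHt}\chi_Y\|\;\le\;\|\chi_X e^{-\mu b\cdot x}\|\,\|e^{-iH_{i\mu b}t}\|\,\|e^{\mu b\cdot x}\chi_Y\|\;\le\; e^{-\mu(d-c|t|)},
$$
which is \eqref{MVB} with the even sharper rate $\mu$.

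For general disjoint $X,Y$ no single $b\in S^{n-1}$ need separate them globally, and this is where the loss factor $\mu/\mu'>1$ enters. The plan is to cover $X$ by small cells $\{B_i\}$, choose for each cell a direction $b_i$ pointing from a closest $Y$-point toward the cell (so that $B_i$ and $Y$ are separated by a hyperplane normal to $b_i$ with gap at least $d_{XY}$), and introduce a smooth partition of unity $\{\eta_i\}$ subordinate to $\{B_i\}$ to decompose
$$
\chi_X e^{-iHt}\chi_Y\;=\;\sum_i\chi_X\eta_i^{1/2}\bigl(\eta_i^{1/2}e^{-iHt}\chi_Y\bigr),
$$
applying the linear-weight bound in direction $b_i$ term by term. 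Summing the local contributions produces the constant $C=C(\mu/\mu'-1,\mu,n)$, with divergence as $\mu'\uparrow\mu$ reflecting the required density of the cover and the $L^\infty$-size of the partition; simultaneously the rate in the distance drops from the optimal $\mu$ to $\mu'$, while the time coefficient $\mu'c'=\mu c$ is preserved.

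The main obstacle is exactly this last geometric step. Condition~(A) controls only conjugation by \emph{linear} exponential weights $e^{\eta\cdot x}$, so the passage from the half-space setting to the genuine Euclidean distance $d_{XY}$ must be implemented indirectly, via the covering/partition argument above (or an equivalent integral representation of a Lipschitz weight built from $d(\cdot,Y)$). The trade-off between the density of the cover and the exponential rate extracted from each piece produces both the loss $\mu\to\mu'$ in the exponent and the divergence of $C$ as $\mu'\uparrow\mu$; the first two steps are the classical Combes--Thomas/Agmon scheme, made transparent by the analyticity assumption.
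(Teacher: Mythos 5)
Your core mechanism matches the paper's: deform by $T_\zeta$ with $\zeta=\xi+i\mu b$, use Condition~(A) and a Gr\"onwall-type differential estimate to get $\|e^{-iH_{\zeta}t}\|\le e^{\mu c|t|}$ (Proposition~\ref{prop: E}), and establish $\chi_X e^{-iHt}\chi_Y=\chi_X T_\zeta^{-1}U_{t,\zeta}T_\zeta\chi_Y$ by analytic continuation of a $\xi$-independent quantity (Lemma~\ref{lem: 2.3}); your hyperplane-separated case is exactly Eqs.~\eqref{YUtX: est}--\eqref{eq: 2.10}.

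The geometric reduction, however, has a real gap. You cover only $X$ by cells $\{B_i\}$ and assert that each $B_i$ is separated from $Y$ by a hyperplane normal to some $b_i\in S^{n-1}$ with gap at least $d_{XY}$, so that the factor $\|\eta_i^{1/2}e^{-iHt}\chi_Y\|$ is controlled by a single linear-weight estimate. That assertion is false at the stated generality: take $X=B_1(0)$ and $Y=\{|x|>2\}$, two disjoint open sets with $d_{XY}=1$, yet $\sup_{y\in Y}b\cdot y=+\infty$ for every $b\in S^{n-1}$, so $\|e^{\mu b\cdot x}\chi_Y\|=\infty$ and no choice of $b_i$ makes your per-cell term finite. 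More generally, a single linear weight controls $\chi_Y$ only when $Y$ sits in a half-space normal to $b_i$, which fails whenever $Y$ is nonconvex or unbounded in more than one direction. The paper avoids this by covering \emph{both} $X$ and $Y$ by balls of radius $r=\tfrac{\epsilon}{2}d_{XY}$ and choosing a \emph{pair-dependent} direction $b=(x_j-y_k)/|x_j-y_k|$ in the estimate for $\chi_{B_r(x_j)}e^{-iHt}\chi_{B_r(y_k)}$: each such pair does satisfy $r_X-r_Y\ge(1-\epsilon)d_{x_jy_k}\ge(1-\epsilon)d_{XY}$ (Eq.~\eqref{E10a}), and the resulting double sum over pairs of balls is summed via the shell decomposition of Eqs.~\eqref{shell}--\eqref{shell2}. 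It is that double-cover summation, not a one-sided partition of $X$, which produces the loss $\mu\mapsto\mu'$ and the constant $C(\mu/\mu'-1,\mu,n)$; with a cover of $X$ alone the argument breaks down for a general disjoint $Y$.
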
 


This theorem is proven in Section~\ref{sec:pfMVB}. We call inequality~\eqref{MVB} the \emph{uniform maximal velocity bound} (uMVB).\par 

Inequality~\eqref{MVB} can be interpreted as the Hamiltonian having effectively finite group velocity (defined as $i[H, x]$). For unbounded group velocities, MVB is not true unless one restricts initial conditions.\par
For the Schr\"odinger-type operator \eqref{H}  on $L^2(\R^n)$, Assumption~(A) follows from the following condition



\begin{align}\label{om-cond'} 
\begin{aligned}
    &\hspace{-1cm}\text{$\omega(k)$ has an analytic continuation ($\omega(\zeta)$) from $\mathbb{R}^n$ to \st{the polystrip} $\mathcal{S}_a^n$,} \\
    &\hspace{-1cm}\text{\st{for some $a>0$,} and $\im \omega(\zeta)$ is a bounded function on $\mathcal{S}_a^n$.}
\end{aligned}
\end{align}
\medskip


Note that the Schr\"odinger operator $H=-\Delta+V$  on $L^2(\mathbb R^n)$ does not satisfy the second part of Condition~\eqref{om-cond'} since $\im (p+\zeta)^2=2 p\cdot \im \zeta $ is not bounded.  On the other hand, the semi-relativistic Hamiltonian 
\[H=\sqrt{-\Delta+m^{2}}+V\] 
obeys Condition \eqref{om-cond'} with $a=m$.\par 
Earlier, state-dependent, power-decay MVB were proven, in connection with quantum scattering theory,  in \cites{APSS, HeSk, HunSig2, SigSof2,Skib}. As we were completing this paper, we came across a very recent preprint \cite{CJWW} which proves MVB with exponential estimates on tails for (translationally invariant) quantum dynamics of a particle on a lattice and whose approach overlaps with ours.\par
As was mentioned above, Theorem~\ref{thm:MVB} holds with $\mathbb R^n$ replaced by $\mathbb Z^n$ and covers   Hamiltonians on $L^2(\mathbb Z^n)$ of the form
\begin{equation}
    H=T+V,
\end{equation}
 where $T$ is 
  a {symmetric } operator with an exponentially decaying matrix elements $t_{ x, y}$, i.e.
\begin{equation}
    |t_{ { x, y}}|\leq Ce^{-a|\ { x - y}|}, \qquad \text{for some }a>0,
\end{equation}
e.g.  the negative of the discrete Laplacian $\Delta_{\mathbb Z^n}$ on $\mathbb Z^n$. 
In this case, $\xi$ in the definition of $T_\xi$ and in Condition (A) {could be identified with a point in} the dual (quasimomentum) space $K\equiv (\mathbb Z^n)^*\approx\ 
 \mathbb R^n/ \mathbb Z^n \stkout{\approx F}$, and $\xi \cdot x$ {could} be understood as the linear functional $\xi(x)\in K$. (For a general lattice $\mathcal L$ in $\mathbb R^n$, the (quasi) momentum space $\mathcal L^*$ is isomorphic to the torus $\mathbb R^n/ \mathcal L'$, where $\mathcal L'$ is the lattice reciprocal to $\mathcal L$.)


{Theorem \ref{thm:MVB-diff} of Appendix \ref{sec: DEs} implies} that integral kernels decaying as powers lead to the power decay of corrections.

One can also extend our proof to matrix-valued Hamiltonians accounting for {internal} degrees of freedom {and to time-dependent Hamiltonians, $H(t)$.}


Finally, we mention 
the minimal velocity bounds used extensively in the scattering theory (see \cites{DHS, Der,  DerGer, FS0, FaupSiga, FLS1, FrGrSchl, FrGrSchl2, Ger, Gries, HeSk, HunSigSof, Sig, SigSof2, Skib} and reviews in \cites{DerGer2, HunSig2}). The latter involves a power decay of the leakage  
and is proven on the infinitesimal level, via lower bounds on commutators of $H$ with appropriate 
 generators, say, $A=\frac12(\n \om(p)\cdot x+ x\cdot \n\om(p))$, 
restricted to thin energy shells away from the zero group velocity regions (the Mourre estimate). This leads to severe restrictions on the potentials, $V(x)$, and dispersion relations, $\om(p)$. 


\DETAILS{
On a level of deformations, 
this would involve  proving an estimate of the form $\|e^{-i H_\z t}\|\le C e^{-ct}$, for some $C, c>0$,  on the deformed semigroup $e^{-i H_\z t}$, where $H_\z$ is an analytic continuation of the family $H_\xi:= T_\xi H T_\xi^{-1}$, where $T_\xi:=e^{i A \xi}$, as per our proof. 
\vspace{3cm}}



\subsection{Localization}A key notion in analysis of evolution of quantum information is that of localization. It is reasonable to consider states localized in bounded sets, say, states created in an apparatus in a lab. With this motivation, we say that 
\begin{itemize}
\item a state $\rho$ is \emph{localized} in $X$, if in $\rho$, the probability for the system to be in $X$ is equal to $1$:
\begin{align}\label{st-loc} \Tr( \chi_{X}\rho)=1 \quad \text{ or }\ \quad  \Tr( \chi_{X^c}\rho)=0.
\end{align}
\end{itemize}
\begin{remark} By linearity, this notion could be readily extended to the one of \emph{locally perturbed states}. 
\end{remark}


\subsection{Light cone for evolution of states and observables}

\begin{cor}\label{cor: 1.2} Suppose Condition~(A) holds. Then, for any density operator $\rho_0$ localized in $X$, the probability that its evolution $\rho_t=\alpha_t'(\rho_0)$ is localized in a disjoint set $Y$ is bounded as 
\begin{equation}
    \Tr(\chi_Y \alpha'_t(\rho_0))\leq Ce^{-\mu'(d_{XY}-c't)},
\end{equation}
where $\mu, c,\mu',c'$ are as in Theorem~\ref{thm:MVB} and $C=C(\frac{\mu}{\mu'}-1, \mu, n)>0$.
\end{cor}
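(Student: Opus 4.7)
The plan is to reduce the corollary to Theorem~\ref{thm:MVB} by exploiting the localization of $\rho_0$ to sandwich the evolution between characteristic functions of $X$ and $Y$. The first step is to upgrade the localization condition from its trace form to an operator identity: since $\rho_0\geq 0$ is trace-class with $\Tr(\chi_{X^c}\rho_0)=0$, the positive Hilbert--Schmidt operator $\chi_{X^c}\rho_0^{1/2}$ has vanishing Hilbert--Schmidt norm, $\|\chi_{X^c}\rho_0^{1/2}\|_{HS}^2=\Tr(\rho_0^{1/2}\chi_{X^c}\rho_0^{1/2})=0$, so $\chi_{X^c}\rho_0^{1/2}=0$. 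Therefore $\chi_{X^c}\rho_0=\rho_0\chi_{X^c}=0$, i.e.
\[
\rho_0 \;=\; \chi_X\,\rho_0\,\chi_X.
\]

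With this in hand, I insert the identity into the probability of interest and use the projection property $\chi_Y^2=\chi_Y$ together with cyclicity of the trace:
\begin{align*}
\Tr\bigl(\chi_Y\,\alpha'_t(\rho_0)\bigr)
&=\Tr\bigl(\chi_Y\,e^{-iHt}\chi_X\,\rho_0\,\chi_X\,e^{iHt}\,\chi_Y\bigr) \\
&=\Tr\bigl(B\,\rho_0\,B^{*}\bigr),\qquad B\,:=\,\chi_Y\,e^{-iHt}\,\chi_X.
\end{align*}
Rewriting this as $\Tr(B^{*}B\,\rho_0)$, I use $0\le B^{*}B\le \|B\|^{2}\,\mathbf 1$ and $\Tr(\rho_0)=1$ to conclude
\[
\Tr\bigl(\chi_Y\,\alpha'_t(\rho_0)\bigr)\;\le\;\|B\|^{2}\,\Tr(\rho_0)\;=\;\|\chi_Y\,e^{-iHt}\,\chi_X\|^{2}.
\]

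Finally I apply Theorem~\ref{thm:MVB} (with the roles of $X$ and $Y$ swapped, which is harmless since $d_{XY}=d_{YX}$) to obtain $\|\chi_Y e^{-iHt}\chi_X\|\le C\,e^{-\mu'(d_{XY}-c'|t|)}$, and hence
\[
\Tr\bigl(\chi_Y\,\alpha'_t(\rho_0)\bigr)\;\le\;C^{2}\,e^{-2\mu'(d_{XY}-c'|t|)},
\]
which in particular implies the stated bound (absorbing the factor $2$ and the $|\,\cdot\,|$ into the trivial region $d_{XY}\le c'|t|$, where the probability is bounded by $1$).

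There is no substantive obstacle: the only nontrivial microstep is deducing the operator identity $\rho_0=\chi_X\rho_0\chi_X$ from the scalar condition $\Tr(\chi_{X^c}\rho_0)=0$, and this is a standard consequence of positivity. Everything else is algebraic manipulation and a direct invocation of Theorem~\ref{thm:MVB}.
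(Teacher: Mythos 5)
Your proof is correct, and since the paper does not spell out a proof of Corollary~\ref{cor: 1.2} (treating it as an immediate consequence of Theorem~\ref{thm:MVB}), your argument fills in exactly what the authors leave implicit. The upgrade from the scalar condition $\Tr(\chi_{X^c}\rho_0)=0$ to the operator identity $\rho_0=\chi_X\rho_0\chi_X$ via positivity and the Hilbert--Schmidt norm is the right microstep, the cyclicity/projection manipulation is sound, and the final reduction to $\|\chi_Y e^{-iHt}\chi_X\|$ via Theorem~\ref{thm:MVB} is exactly what is needed. In fact your bound $C^2 e^{-2\mu'(d_{XY}-c'|t|)}$ is slightly stronger than what the corollary states, and you correctly observe that the weaker claimed bound follows by splitting on the sign of $d_{XY}-c'|t|$. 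One optional simplification: after rewriting the quantity as $\Tr\bigl((\chi_X e^{iHt}\chi_Y e^{-iHt}\chi_X)\,\rho_0\bigr)$, the crude estimate $\|\chi_X e^{iHt}\chi_Y e^{-iHt}\chi_X\|\le\|\chi_X e^{iHt}\|\,\|\chi_Y e^{-iHt}\chi_X\|\le\|\chi_Y e^{-iHt}\chi_X\|$ directly yields the first-power bound of the corollary without the squaring, but both routes are valid.
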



This corollary says that the probability that $\rho_t$ spills outside the light cone 
\begin{equation}
\Lam_{X,c}:=\{(x, t): \dx< ct\}\label{LamX}
\end{equation}
of $X$ is exponentially small.\par


The second key ingredient in the general theory is the notion of observables. These are operators on $\mathcal H$ representing actual physical quantities and their probability distributions. An average of a physical quantity (say, momentum) represented by an observable $A$ (say, $p=-i\nabla$) in a state $\rho$ is given by $\Tr(A\rho)$. There is a duality between states and observables given by the coupling 
\begin{equation}
   (A,\rho)\equiv \rho(A):= \Tr(A\rho),
\end{equation}
which can be considered as either a linear, positive functional of $A$ or a convex one of $\rho$. In what follows, we identify density operators $\rho$ with linear positive functionals $\rho: A\ra \rho (A):=\Tr( A\rho)$.\par


The evolution of observables is determined by the Heisenberg equation
\begin{equation}
    \p_t A_t=i [H, A_t].\label{eq: ptAt}
\end{equation}
Given initial conditions $A_t\vert_{t=0}=A$, ~\eqref{eq: ptAt} generates the (Heisenberg) automorphism 
\begin{equation}
   A_t\equiv \alpha_t(A)=e^{itH} Ae^{-itH}.
\end{equation}
The Heisenberg equation (or representation) is equivalent to the vNE and, since observables form $C^*$ algebra, often, is more convenient to work with. The duality between states and observables extends to respective evolutions 
\begin{equation}
    \Tr(\alpha_t(A)\rho)=\Tr(A\alpha_t'(\rho)).
\end{equation}
\par It is natural to have observables which act locally, i.e. in some set, but leave states outside this set unchanged. Thus, we introduce
\begin{itemize}
    \item an observable $A$ \emph{acts} on $X$ iff it is of the form 
\begin{equation}
    A=\chi_XA\chi_X+\chi_{X^c},\label{A=Ax+Axc}
\end{equation}
where, recall, $\chi_{X}$ and $\chi_{X^c}$ stand for multiplication operators by the corresponding cut-off functions (so that $\chi_X+\chi_{X^c}=\mathbbm1$). \par 
\end{itemize}\par
As suggested by the term,~\eqref{A=Ax+Axc} implies that $\chi_{X^c} A\psi=\chi_{X^c}\psi$ and $A\psi=A\chi_X\psi+\chi_{X^c}\psi$. Note that if $A$ and $B$ act on $X$ and $Y$, respectively, then 
\begin{equation}
[A,B]=0, \quad \text{ whenever $X\cap Y=\emptyset$.}
\end{equation}\par


We call the smallest set on which an observable $A$ acts the \emph{action domain} of this observable and denote it by $\text{act} A$. Many notions related to and statements about evolution of quantum information can be formulated in parallel for states and observables.


The next useful result on localization of the Heisenberg evolution parallels Corollary~\ref{cor: 1.2}. It shows that the evolution $A_t=\alpha_t(A)$ acts, up to exponentially small tails, within the light cone of its initial action domain of $A$.



We define an approximation of the evolution $A_t=\alpha_t(A)$ in the set $U$ as
\begin{equation}
    A_{t,U}:=\chi_U A_t\chi_U+\chi_{U^c}.\label{def: AtY}
\end{equation}
Clearly, $A_{t,U}$ act on the set $U$. For a subset $X\subset \mathbb R^n$, let $X_\eta$ be the $\eta$-neighborhood of $X:$  
\begin{equation}
X_\eta =\{x\in \mathbb R^n\,:\, d_X(x)< \eta\}.  
\end{equation}
\begin{thm}[Light cone approximation of Heisenberg evolution]\label{thm: LCA} Suppose the Condition~(A) holds. Then there exist $c>0$ and $C=C(c,n)>0$ s.t. for any $\eta\geq 1$ and every open set $X$ and every operator $A$ acting on $X$, the evolution $A_t\equiv \alpha_t(A)$ satisfies 
\begin{equation}
    \|A_t-A_{t,X_\eta}\|\leq Ce^{-\mu (\eta-ct)}\|A\|.\label{aprxAt}
\end{equation}
\end{thm}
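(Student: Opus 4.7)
The plan is to reduce everything to the uniform maximal velocity bound in Theorem~\ref{thm:MVB}, via a short algebraic identity that isolates the ``nontrivial'' part of $A$.

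Since $A$ acts on $X$, Eq.~\eqref{A=Ax+Axc} together with $\chi_X+\chi_{X^c}=\mathbbm{1}$ implies that
\[
B:=A-\mathbbm{1}=\chi_X A\chi_X-\chi_X=\chi_X B\chi_X,
\]
i.e.\ $B$ is supported in $X$ on both sides. Conjugating by $e^{-iHt}$ gives $A_t=\mathbbm{1}+B_t$ with $B_t=e^{itH}\chi_X B\chi_X e^{-itH}$. The same computation, using $\chi_{X_\eta}+\chi_{X_\eta^c}=\mathbbm{1}$, yields
\[
A_{t,X_\eta}=\chi_{X_\eta}(\mathbbm{1}+B_t)\chi_{X_\eta}+\chi_{X_\eta^c}=\mathbbm{1}+\chi_{X_\eta}B_t\chi_{X_\eta}.
\]
Subtracting the identity cancels, and inserting $\mathbbm{1}=\chi_{X_\eta}+\chi_{X_\eta^c}$ on either side of $B_t$ gives the decomposition
\[
A_t-A_{t,X_\eta}=B_t-\chi_{X_\eta}B_t\chi_{X_\eta}=\chi_{X_\eta}B_t\chi_{X_\eta^c}+\chi_{X_\eta^c}B_t\chi_{X_\eta}+\chi_{X_\eta^c}B_t\chi_{X_\eta^c}.
\]

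Each of the three terms is of the form $\chi_{\sharp}\,e^{itH}\chi_X B\chi_X e^{-itH}\chi_{\flat}$, with at least one of $\chi_{\sharp},\chi_{\flat}$ equal to $\chi_{X_\eta^c}$. By the definition of $X_\eta$, the sets $X$ and $X_\eta^c$ are disjoint with $d_{X,X_\eta^c}=\eta$, so Theorem~\ref{thm:MVB} (applied with some fixed $\mu'\in(0,\mu)$, and via taking adjoints to both $e^{itH}$ and $e^{-itH}$, using self-adjointness of $H$) yields
\[
\|\chi_{X_\eta^c}e^{\pm itH}\chi_X\|\le C_0\,e^{-\mu'(\eta-c'|t|)},
\]
with $c'=\mu c_0/\mu'$ from \eqref{c}. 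Estimating the remaining factors trivially by $\|\chi_{\cdot}\|\le 1$ and $\|B\|\le\|A\|+1\le 2\|A\|$ (the bound is vacuous if $\|A\|<1$ and $X^c=\emptyset$), I obtain
\[
\|A_t-A_{t,X_\eta}\|\le 3C_0\,e^{-\mu'(\eta-c'|t|)}\|A\|.
\]
Relabeling $\mu'\to\mu$ and $c'\to c$ and absorbing numerical factors into $C$ produces the claimed inequality~\eqref{aprxAt} for $\eta\ge 1$.

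I don't expect any real obstacle: the entire argument is algebraic manipulation plus a single invocation of Theorem~\ref{thm:MVB}. The only mildly delicate point is the cancellation of the identity parts of $A_t$ and $A_{t,X_\eta}$, which is exactly what makes the $\mathbbm{1}$-piece in the definition \eqref{A=Ax+Axc} of ``acts on $X$'' convenient: it localizes $B=A-\mathbbm{1}$ to $X$, and this localization is precisely what is propagated by the MVB.
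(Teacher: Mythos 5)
Your proof is correct and follows essentially the same route as the paper's: isolate $B=\tilde A_X=A-\mathbbm{1}$, which is localized in $X$, note the identity parts cancel in $A_t-A_{t,X_\eta}$, decompose what remains into terms each carrying at least one $\chi_{X_\eta^c}$, and invoke Theorem~\ref{thm:MVB} with $d_{X,X_\eta^c}\ge\eta$. The paper groups your three terms as $\chi_{X_\eta}\alpha_t(\tilde A_X)\chi_{X_\eta^c}+\chi_{X_\eta^c}\alpha_t(\tilde A_X)$ (two terms), but this is a cosmetic difference, not a different argument.
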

This theorem is proven in Section~\ref{sec: MVB}. It says that, up to exponentially small tails, the evolution of an operator acting on $X$ acts inside the $c$-light cone of $X$. \par 
One can also define the localization of the evolution $A_t$ to a set $U$ as $A_{t,U}=e^{itH_{U}}Ae^{-itH_{U}}$ as an evolution of $A$ with a Hamiltonian $H_{U}$ supported in $U$, but the proof of this version is more involved. \par


Our next result has no parallel for quantum states and is an analogue of one of the key results of quantum information theory.


\begin{thm}[Quantum-mechanical Lieb-Robinson bound] 
	\label{thm:LRB}
Suppose Condition~(A) holds and let $X,Y\subset \mathbb R^n$ with $d_{XY}>0$. {Then, there exist $c>0$ and $C=C(n,c)>0$ s.t.~for }every pair operators $A$ and $B$ acting on $X$ and $Y$, respectively, we have the following estimate 
	\begin{align}\label{LRB'}\|[ \al_t (A), B]\| \le C e^{-\mu (d_{XY}-ct)}\|A\|\|B\|.
\end{align}	\end{thm}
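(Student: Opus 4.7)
The plan is to deduce the Lieb-Robinson bound from the light-cone approximation Theorem~\ref{thm: LCA} by exploiting the fact that operators with disjoint action domains commute. The key idea is that, modulo exponentially small tails, $\alpha_t(A)$ is well approximated by an operator supported inside $X_\eta$; if we choose $\eta \le d_{XY}$ then $X_\eta$ is disjoint from $Y$, and the approximant commutes with $B$ exactly.

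Concretely, fix the constants $c,C$ from Theorem~\ref{thm: LCA} and take $\eta:= d_{XY}$. I would then write
\begin{equation*}
[\alpha_t(A), B] = [\alpha_t(A) - A_{t,X_\eta}, B] + [A_{t,X_\eta}, B].
\end{equation*}
For the second commutator, I would verify from the definition~\eqref{def: AtY} that $A_{t,X_\eta}$ has the form~\eqref{A=Ax+Axc} with $X$ replaced by $X_\eta$, so it acts on $X_\eta$. Since $y\in Y$ satisfies $d_X(y)\ge d_{XY}=\eta$, we have $X_\eta\cap Y=\emptyset$, so $[A_{t,X_\eta},B]=0$. For the first term, I would apply $\|[\,\cdot\,,B]\|\le 2\|\cdot\|\|B\|$ together with the bound~\eqref{aprxAt} of Theorem~\ref{thm: LCA} to obtain
\begin{equation*}
\|[\alpha_t(A)-A_{t,X_\eta},B]\|\;\le\; 2C\,e^{-\mu(d_{XY}-ct)}\|A\|\|B\|.
\end{equation*}

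The only subtleties are the regimes in which Theorem~\ref{thm: LCA} is not directly applicable or in which the bound is trivial, and I would dispose of them separately. If $d_{XY}-ct\le 0$, then $e^{-\mu(d_{XY}-ct)}\ge 1$ and the trivial bound $\|[\alpha_t(A),B]\|\le 2\|A\|\|B\|$ suffices (absorbing the factor $2$ into a new $C$). If $0<d_{XY}<1$, the hypothesis $\eta\ge 1$ of Theorem~\ref{thm: LCA} fails; however, in that case $e^{-\mu(d_{XY}-ct)}\ge e^{-\mu}$ whenever $d_{XY}-ct\ge 0$, so the trivial bound again yields the conclusion after enlarging $C$ to include the factor $2e^{\mu}$. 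Putting these cases together, the stated inequality~\eqref{LRB'} holds with a constant of the form $C=\max(2C_{\ref{thm: LCA}},\,2e^{\mu},\,2)$ depending only on $n$, $c$, and $\mu$.

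I do not expect any genuine obstacle: the content of the Lieb-Robinson bound has been packaged into Theorem~\ref{thm: LCA}, and the remaining argument is the standard ``telescoping to a localized approximant'' trick, with only the small-$d_{XY}$ edge case requiring a moment of bookkeeping.
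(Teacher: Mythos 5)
Your proof is correct, but it takes a different route from the paper's own proof of Theorem~\ref{thm:LRB}. The paper argues directly from the MVB (Theorem~\ref{thm:MVB}): writing $A=\tilde A_X+\mathbbm{1}$ and $B=\tilde B_Y+\mathbbm{1}$ with $\tilde A_X,\tilde B_Y$ localized in $X,Y$ (so $\tilde A_X=\chi_X\tilde A_X\chi_X$, $\tilde B_Y=\chi_Y\tilde B_Y\chi_Y$), the commutator reduces to $[\alpha_t(\tilde A_X),\tilde B_Y]$, and expanding the commutator and inserting the cutoffs immediately produces factors $\chi_X e^{\mp itH}\chi_Y$, to which Theorem~\ref{thm:MVB} is applied; the elementary estimate $\|\tilde A_X\|\le 2\|A\|$ finishes it. Your route instead passes through the light-cone approximation theorem (Theorem~\ref{thm: LCA}) and telescopes to the localized approximant $A_{t,X_\eta}$ with $\eta=d_{XY}$. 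This is precisely the alternative flagged in the remark following the proof of Theorem~\ref{thm: LCA}, where it is described as ``a natural (but slightly longer) proof.'' Your argument is sound, and your handling of the two edge cases (when $d_{XY}-ct\le 0$, and when $d_{XY}<1$ so the hypothesis $\eta\ge 1$ of Theorem~\ref{thm: LCA} is unavailable) is correct and genuinely needed; note that the paper's direct proof avoids this bookkeeping altogether precisely because it never invokes Theorem~\ref{thm: LCA} with its $\eta\ge1$ hypothesis. What your approach buys conceptually is a clean modular picture---once the evolving observable is known to live (up to exponentially small error) inside its light cone, the LRB is immediate from exact commutation of observables with disjoint action domains---at the small cost of the extra layer (Theorem~\ref{thm: LCA}) and the edge-case analysis.
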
 
A proof of Theorem \ref{thm:LRB} is given in Section \ref{sec:LRB}. We call \eqref{LRB'} the \textit{quantum-mechanical Lieb-Robinson bound (LRBqm)}. Recall that the physical importance of commuting quantum observables is that they can be measured simultaneously.


Theorem~\ref{thm:LRB} yields that, for any state $\rho$ and for all $\abs{t}< d_{XY}/c$,
\begin{align}\label{LRB}		|\rho ([ \al_t (A), B] )| 
	\le C\|A\|\|B\| e^{-\mu (d_{XY}-ct)}. 
	\end{align}

\begin{figure}[h!]
    \centering
\begin{tikzpicture}
    \draw[thick] (3.5,0) -- (5,4);
    \draw[thick] (2,0) -- (0.5,4);
    \node at (2.8,2.5) {light cone};
    \draw[dashed] (0,2) -- (10.5,2);
    \node at (2.8,2) {of $A$ (or $X$)};

    \draw[thick] (6.5,0) -- (6.5,4);
    \draw[thick] (9.5,0) -- (9.5,4);
    
    \node at (8,2.5) {space-time};
    \node at (8,2) {dynamics of $B$ };
    \node at (7.7,1.5) {(or $Y$)};
    \draw[very thick] (4.25,2) -- (6.5,2) node[midway, below] {$d_{XY}-ct_0$};
    \draw[thick,->] (5.375,2) -- (5.375,1.75);

   \begin{scope}
        \clip (3.5,0) -- (5,4) -- (0.5,4) -- (2,0) -- cycle;
        \foreach \x in {0,0.1,...,10} {
            \draw[thin] (\x,0) -- (\x-3.5+0.5,4);
        }
    \end{scope}
    \begin{scope}
        \clip (6.5,0) -- (6.5,4) -- (9.5,4) -- (9.5,0) -- cycle;
        \foreach \x in {0,0.1,...,14} {
            \draw[thin] (\x,0) -- (\x-3.5+0.5,4);
        }
    \end{scope}

    \draw[very thick] (2,0) -- (3.5,0) node[midway, below] {$X$};
    \node at (0.3,4.8) {$t$};
    \draw[->] (0,0) -- (0,5);
    \draw[thin] (0,0) -- (2,0);
    \draw[thin] (3.5,0) -- (6.5,0);
    \draw[very thick] (6.5,0) -- (9.5,0) node[midway, below] {$Y$};
    \draw[thin] (9.5,0) -- (10.5,0) node[right] {$\mathbb R^n$};
    \node at (-0.3,2) {$t_0$};


\end{tikzpicture}
\caption{Light cone diagram of $A$ and $B$}
    \label{fig: 2-1}
\end{figure}
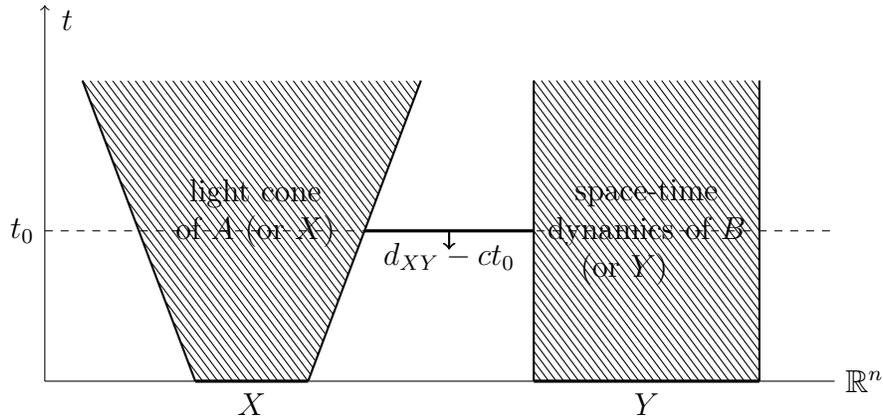
\noindent Estimate \eqref{LRB} shows that, with the probability approaching $1$ exponentially, as $t\ra \infty$, an evolving observable $A_t=\al_t(A)$ commutes with any other observable acting outside the light cone
	\[\{ \,(x,t) \,\,|\, \dist(x, \text{act} A)\le c t\},\]
 of $\text{act} A$, where $\text{act} A$ is the smallest set on which $A$ acts (see Fig.~\ref{fig: 2-1}).


	\DETAILS{This implies that the maximal speed of quantum propagation for \eqref{HeisEvol} is bounded  (up to an absolute constant) by the number $\kappa$ defined in \eqref{kappa}.}

Recall that the expectation  $-\rho([A(t), B]^2)\equiv -\tr([A(t), B]^2 \rho)$ is called the  out-of-time-order correlations (OTOC). The inequality $-\rho([A(t), B]^2)\le \|[A(t), B]\|^2$ and Theorem~\ref{thm:LRB} imply   
\begin{cor}[OTOC estimate] 
	\label{cor:OTOCest} Suppose the Condition~(A) holds and let $\mu$ and $c$ be as in Theorem~\ref{thm:LRB}. Then OTOC $-\rho([A(t), B]^2)$ satisfies the estimate 
\begin{align}\label{OTOCest}-\rho([A(t), B]^2)\le  C e^{-2\mu (d_{XY}-ct)}(\|A\|\|B\|)^2. 
\end{align}
\end{cor}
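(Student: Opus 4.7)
The plan is a direct two-step reduction. First, I would bound the OTOC by the squared operator norm of the commutator, using the elementary fact that for any bounded operator $M$ on $\mathcal H$ and any density operator $\rho$,
\begin{equation*}
|\operatorname{Tr}(M\rho)| \leq \|M\|\,\operatorname{Tr}(\rho) = \|M\|,
\end{equation*}
which is a special case of the H\"older-type inequality $|\operatorname{Tr}(M\rho)|\leq \|M\|_\infty \|\rho\|_1$ on the Schatten scale. Second, I would feed the resulting operator-norm estimate into the quantum-mechanical Lieb-Robinson bound of Theorem~\ref{thm:LRB} and square.

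Concretely, applying the trace inequality to $M = [\alpha_t(A),B]^2$ and using submultiplicativity $\|C^2\|\leq \|C\|^2$ of the operator norm gives
\begin{equation*}
-\rho\bigl([\alpha_t(A),B]^2\bigr) \leq \bigl|\rho\bigl([\alpha_t(A),B]^2\bigr)\bigr| \leq \bigl\|[\alpha_t(A),B]\bigr\|^2,
\end{equation*}
which is precisely the inequality indicated in the text preceding the corollary. Inserting the estimate~\eqref{LRB'} from Theorem~\ref{thm:LRB} into the right-hand side and squaring then yields
\begin{equation*}
-\rho\bigl([\alpha_t(A),B]^2\bigr) \leq C^2\, e^{-2\mu (d_{XY}-ct)}\,(\|A\|\|B\|)^2,
\end{equation*}
and absorbing $C^2$ into a new constant, still denoted $C=C(n,c)$, produces~\eqref{OTOCest}.

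There is essentially no substantive obstacle: the entire argument is immediate once Theorem~\ref{thm:LRB} is in hand, and all constants are controlled uniformly in $A$, $B$, and the state $\rho$. The only point worth emphasizing is that the light-cone structure enters purely through the commutator norm bound; no assumption on the localization of $\rho$ is needed, so the estimate applies to arbitrary (in particular mixed, thermal, or non-localized) states.
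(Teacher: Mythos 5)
Your argument is exactly the one the paper intends: the paper states, immediately before the corollary, the inequality $-\rho([A(t),B]^2)\le \|[A(t),B]\|^2$ and then invokes Theorem~\ref{thm:LRB}, which is precisely your two-step reduction. Correct, and essentially identical to the paper's reasoning.
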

Estimate \eqref{OTOCest} extends an estimate of OTOC in the finite dimensions,
\begin{equation}\frac1{D}\tr([A(t), B]^2),\end{equation}
where $D=\dim\mathcal H$, see e.g. \cite{KuwSaito1}, to the infinite-dimensional case. \par


In Appendix~\ref{sec: DEs}, we extend our approach to differentiable deformations $H_\xi$ obtaining power bounds on the error terms. We expect that it could be extended to open quantum systems, where estimates of evolving states and observables cannot be reduced to estimate of the Schr\"odinger evolution and one has to estimate instead the von Neumann-Lindblad semigroup, $e^{Lt},$ see \cite{Breteaux_2022}, and to many-body (condensed matter) systems as suggested by our results on $N$-particle Schr\"odinger dynamcis presented in Section~\ref{sec: MVB-N}. \par


This paper is organized as follows. Theorems~\ref{thm:MVB},~\ref{thm: LCA} and~\ref{thm:LRB} are proven in Sections~\ref{sec:pfMVB},~\ref{sec: MVB} and~\ref{sec:localization}, respectively. In Appendices~\ref{sec: DEs} and~\ref{sec: MVB-N}, we present extensions of our technologies to differentiable families $H_\xi$ and to $N$-particle systems.\par
\underline{Notation.} We use the abbreviation $\| \cdot\|$ for both $ \| \cdot\|_{L^2(\mathbb{R}^n)}$ and $\| \cdot\|_{L^2(\mathbb{R}^n)\to L^2(\mathbb{R}^n)}$. Throughout this paper, $C$ will denote a constant and may vary from one line to another. We write $\lesssim$ or $\gtrsim$ whenever $A\leq CB$ or $CA\geq B$ for some constant $C>0$. We write $A\lesssim_a B$ or $A\gtrsim_a B$ if  $A\leq C_aB$ or $C_aA\geq B$ for some constant $C_a>0$ which depends on parameter $a$.

As usual, $\p_x^\alpha=\prod\limits_{j=1}^n \p_{x_j}^{\alpha_j}$, for $\alpha=(\alpha_1,\cdots,\alpha_n)$, $x=(x_1,\cdots,x_n)$ and $|\alpha|=\sum\limits_{j=1}^n\alpha_j$. \par


In what follows, $C$ stands for a generic constant which changes from equation to equation and are independent of variable parameters, such as distances between sets and their sizes, etc. \par


\section{Proof of Theorem~\ref{thm:MVB}}\label{sec:pfMVB}
\begin{proof}[Proof of Theorem~\ref{thm:MVB}] As usual for $\mathbb R^n$, we use the Euclidean inner product to identify the position and momentum spaces. \par



Recall the $n$-parameter unitary group 
\begin{equation}\label{def: Txi}
T_{\xi}=e^{-ix\cdot \xi}, \xi\in \mathbb{R}^n, 
\end{equation}
and introduce the {deformed evolution} 
\begin{equation}\label{def: Utxi: n}
U_{t,\xi}:=T_{\xi} e^{-itH}T_{\xi}^{-1}.
\end{equation}
By the unitarity of $T_\xi$, we have 
\begin{equation}
    U_{t,\xi}=e^{-i H_\xi t},\qquad\text{ where }H_\xi:=T_\xi HT_\xi^{-1}.
\end{equation}


\begin{prop}\label{prop: 2.6}
Under Assumption~(A), the operators $U_{t,\xi}$ have  analytic continuations in $\xi$ from $\mathbb R^n$ to $\mathcal S_a^n$, as bounded operators.  
\end{prop}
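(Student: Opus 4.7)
The strategy is to construct $U_{t,\zeta} := e^{-iH_\zeta t}$ as a bounded operator for each $\zeta \in \mathcal S_a^n$ via semigroup theory, and then to transfer the Kato resolvent-analyticity of $H_\zeta$ to operator-norm analyticity of $U_{t,\zeta}$. The starting observation is that for any $\phi \in D(H_\zeta)$,
\begin{equation*}
\re \lan\phi, -iH_\zeta\phi\ran = \lan\phi, (\im H_\zeta)\phi\ran \le M_\zeta \|\phi\|^2,
\end{equation*}
where $M_\zeta := \|\im H_\zeta\| < \infty$ by Assumption~(A). Thus $-iH_\zeta - M_\zeta$ is dissipative. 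The numerical range of $H_\zeta$ is then contained in the strip $\{|\im z| \le M_\zeta\}$, so by Kato analyticity the resolvent $(H_\zeta - z)^{-1}$ exists for $|\im z| > M_\zeta$, providing the range condition needed for Lumer--Phillips. Applied to both $\pm i H_\zeta$, this produces a $C^0$-group $U_{t,\zeta}$ with the exponential bound $\|U_{t,\zeta}\| \le e^{M_\zeta |t|}$.

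To establish analyticity in $\zeta$, I would use the Dunford-type contour representation
\begin{equation*}
U_{t,\zeta} = \frac{1}{2\pi i}\int_\Gamma e^{-izt}(H_\zeta - z)^{-1}\,dz,
\end{equation*}
with $\Gamma$ a contour lying in the common resolvent set as $\zeta$ ranges in a compact neighborhood of a base point $\zeta_0 \in \mathcal S_a^n$. By Kato's hypothesis, $(H_\zeta - z)^{-1}$ is jointly analytic in $\zeta$ uniformly on $\Gamma$, so the integral inherits operator-norm analyticity by dominated convergence. An equivalent route is to prove weak analyticity: for $\phi,\psi$ in a dense core, verify the Cauchy--Riemann equations for $\zeta \mapsto \lan\psi, U_{t,\zeta}\phi\ran$ through the Duhamel identity
\begin{equation*}
\pderiv{U_{t,\zeta}\phi}{\zeta_j} = -i\int_0^t U_{t-s,\zeta}\,\pderiv{H_\zeta}{\zeta_j}\,U_{s,\zeta}\phi\,ds,
\end{equation*}
and upgrade weak analyticity to norm analyticity using the local bound from Step~1 together with a Morera/Vitali argument.

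The main obstacle is that $H_\zeta$ is unbounded, so neither the Duhamel integrand nor the Dunford contour is manifestly convergent in norm (the contour $\Gamma$ must be truncated since $\sigma(H_\zeta)$ is unbounded, and $\p_{\zeta_j} H_\zeta$ acts nontrivially on the domain, not as a bounded perturbation). The standard workaround is the Yosida approximation $H_\zeta^{(\lambda)} := \lambda H_\zeta(\lambda + iH_\zeta)^{-1}$, which is bounded and analytic in $\zeta$ directly from Kato's resolvent analyticity; analyticity of $e^{-iH_\zeta^{(\lambda)}t}$ is then immediate from the convergent power series, and passing to $\lambda \to \infty$ using the strong convergence $e^{-iH_\zeta^{(\lambda)}t} \to e^{-iH_\zeta t}$ together with the uniform bound from Step~1 (Vitali's theorem for locally bounded analytic operator-valued families) yields operator-norm analyticity of the limit on all of $\mathcal S_a^n$.
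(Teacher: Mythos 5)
Your proposal is correct and arrives at the same conclusion, but the route through analyticity is genuinely different from the paper's. For the generation step, the two arguments are essentially equivalent: the paper writes $H_\zeta = H_r + iH_i$, proves the resolvent bound $\|(H_\zeta - i\lambda)^{-1}\| \le (|\re\lambda| - \|\im H_\zeta\|)^{-1}$ via the factorization $(H_\zeta - i\lambda)^{-1} = R_i^{1/2}\bigl[R_i^{1/2}H_r R_i^{1/2} - i\bigr]^{-1} R_i^{1/2}$ with $R_i = (\lambda - H_i)^{-1}$, and invokes Hille--Yosida; you instead check dissipativity of $\mp iH_\zeta - M_\zeta$ directly from $\re\langle\phi, -iH_\zeta\phi\rangle = \langle\phi,(\im H_\zeta)\phi\rangle \le M_\zeta\|\phi\|^2$ and invoke Lumer--Phillips. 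Both yield the same exponential bound. The real divergence is in the analyticity step: the paper simply writes the Duhamel identity $\p_{\bar\zeta_j} U_{t,\zeta} = -i\int_0^t e^{-i(t-s)H_\zeta}\,\p_{\bar\zeta_j}H_\zeta\,e^{-iH_\zeta s}\,ds = 0$ and declares the vanishing, without addressing that $\p_{\zeta_j}H_\zeta$ is unbounded (and, under resolvent-sense Kato analyticity, the domain may even vary with $\zeta$). You correctly flag exactly this obstacle and resolve it by passing to the Yosida approximants $H_\zeta^{(\lambda)}$, for which $e^{-iH_\zeta^{(\lambda)}t}$ is manifestly analytic in $\zeta$ by its norm-convergent power series, then taking $\lambda\to\infty$ with the locally uniform bound and a Vitali argument. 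Your version is more careful on precisely the point the paper glosses over; the paper's version is shorter and transparent about the mechanism (vanishing $\bar\p$-derivative of the generator) but would need the same regularization to be made fully rigorous.
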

\begin{proof} By Assumption~(A), $H_\xi$ has an analytic continuation in $\xi$ from $\mathbb R^n$ to $\mathcal S_a^n$, in the sense of Kato, and the operator $\im H_\zeta=\frac{1}{2i}(H_\zeta-H_\zeta^*)$ is bounded for every $\zeta\in \mathcal S_a^n$. Hence, the relations $H_\zeta=H_r+iH_i,$ where $H_r=\re H_{\zeta}$ and $H_i=\im H_{\zeta}$, and
\begin{equation}
(H_{\zeta}-i\lambda )^{-1}=(H_r+iH_i-i\lambda )^{-1}=R_i^{1/2}\left[ R_i^{1/2}H_rR_i^{1/2}-i\right]^{-1}R_i^{1/2},\label{re: Hlambda}
\end{equation}
{where $R_i=(\lambda-H_i)^{-1}$, imply that $\sigma(-iH_{\zeta})\subset \{\, \zeta\in \mathbb C^n\, : \, |\re \zeta|\leq C\,\}$, $\forall\, \zeta\in \mathcal S_a^n$, for $C=\|\im H_{\zeta}\|>0$, and, for any $\lambda \in \mathbb C$ with $|\re\lambda|>C$, we have the estimate}
\begin{equation}
 \|(H_{\zeta}-i\lambda)^{-1}\|\leq (|\re\lambda|-\|\im H_{\xi^z}\|)^{-1}.
\end{equation}
Hence, by the Hille-Yosida theorem, $H_{\zeta}$ generates the bounded evolution $U_{t,\zeta}=e^{-iH_{\zeta}t}$, $t\in \mathbb R$, which is analytic as an operator-function of $\zeta\in \mathcal S_a^n$, by {the computation} 
{$\p_{\bar\z_j} U_{t,\z}=-i \int_0^t e^{-i(t-s)H_{\z}} \p_{\bar\z_j}H_{\z}e^{-i H_\z s}ds=0 \forall j .$}\end{proof}


 Let $U_t\equiv e^{-itH}$ and $U_{t,\xi}:=T_\xi U_t T_\xi^{-1}$. We have 
\begin{lemma}[Key lemma]\label{lem: 2.3}Under Condition~(A), we have
\begin{align}
    \chi_XU_t\chi_Y=& \chi_X T^{-1}_\zeta U_{t,\zeta} T_\zeta\chi_Y,\qquad \forall\, \zeta\in \mathcal S_a^n.\label{Ut: eq T}
\end{align}
\end{lemma}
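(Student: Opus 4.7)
The plan is to establish the identity first for real $\zeta = \xi$, upgrade it to a weak (quadratic form) identity on a dense class of test vectors for complex $\zeta$ by analytic continuation, and finally promote it to the claimed operator identity using density.

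\textbf{Real case.} For $\xi \in \mathbb{R}^n$, the multiplication operator $T_\xi = e^{-ix\cdot\xi}$ is unitary with $T_\xi^{-1} = T_{-\xi} = T_\xi^\ast$. By the very definition \eqref{def: Utxi: n} of $U_{t,\xi}$, we have $T_\xi^{-1} U_{t,\xi} T_\xi = U_t$, so the identity is trivial on $\mathbb{R}^n$.

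\textbf{Holomorphic family of matrix elements.} Because of the unboundedness of $T_\zeta$ for complex $\zeta$, I would first test the claimed identity against compactly supported vectors. Fix $\phi \in C_c^\infty(X)$ and $\psi \in C_c^\infty(Y)$ (these classes are dense in $\chi_X L^2(\mathbb{R}^n)$ and $\chi_Y L^2(\mathbb{R}^n)$ when $X,Y$ are open; for general measurable sets a standard approximation argument reduces to this case). Note that $(T_\zeta^{-1})^\ast = T_{\bar\zeta}$, so
\begin{equation*}
G(\zeta) \;:=\; \bigl\langle \phi,\, T_\zeta^{-1} U_{t,\zeta} T_\zeta \psi \bigr\rangle
\;=\; \bigl\langle T_{\bar\zeta}\phi,\, U_{t,\zeta}\, T_\zeta \psi \bigr\rangle.
\end{equation*}
Since $\phi,\psi$ have compact support, the vectors $T_\zeta\psi = e^{-ix\cdot\zeta}\psi$ and $T_{\bar\zeta}\phi = e^{-ix\cdot\bar\zeta}\phi$ lie in $L^2(\mathbb{R}^n)$ and depend holomorphically, respectively antiholomorphically, on $\zeta \in \mathbb{C}^n$. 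Combined with the analyticity of $\zeta\mapsto U_{t,\zeta}$ on $\mathcal{S}_a^n$ in the sense of Kato (Proposition~\ref{prop: 2.6}), a direct differentiation under the integral shows that $G$ is holomorphic on $\mathcal{S}_a^n$.

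\textbf{Identity theorem and density.} By the real case, $G(\xi) = \langle \phi, U_t\psi\rangle$ for every $\xi \in \mathbb{R}^n$. Since $G$ is holomorphic on the connected tube $\mathcal{S}_a^n$ and constant on the totally real subset $\mathbb{R}^n$, the identity theorem (applied coordinate-by-coordinate) yields $G(\zeta) = \langle \phi, U_t\psi\rangle$ for all $\zeta \in \mathcal{S}_a^n$. Because $\phi, \psi$ are supported in $X$, $Y$, the right-hand side equals $\langle \chi_X\phi, U_t\chi_Y\psi\rangle$, i.e.\ the matrix element of $\chi_X U_t \chi_Y$. Density of $C_c^\infty(X)$ and $C_c^\infty(Y)$ in $\chi_X L^2$ and $\chi_Y L^2$, together with the fact that $\chi_X U_t \chi_Y$ is bounded (norm $\le 1$), allows us to extend the identity of quadratic forms to the operator identity \eqref{Ut: eq T}.

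\textbf{Main obstacle.} The one genuinely delicate point is that for complex $\zeta$ neither $T_\zeta$ nor $T_\zeta^{-1}$ is a bounded operator on $L^2$, so the composition $\chi_X T_\zeta^{-1} U_{t,\zeta} T_\zeta \chi_Y$ cannot be read off directly as a product of bounded operators for arbitrary (possibly unbounded) $X, Y$. Interpreting the right-hand side first as a sesquilinear form on compactly supported test vectors circumvents this issue: the exponential factors $e^{\pm ix\cdot\zeta}$ appear only on integrands of compact support, keeping all quantities manifestly finite and holomorphic in $\zeta$, after which the identity theorem and a density argument deliver the operator equality.
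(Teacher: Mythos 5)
Your proof is correct and rests on the same central idea as the paper's: the identity is trivial for real $\xi$ by definition of $U_{t,\xi}$, and is then promoted to $\zeta\in\mathcal S_a^n$ by analytic continuation using Proposition~\ref{prop: 2.6}. The implementation differs, however. The paper runs the continuation argument directly at the operator level, asserting that the right-hand side of \eqref{Ut: eq T1} admits an operator-valued analytic continuation in $\xi$ and is constant on $\mathbb R^n$; this tacitly relies on $\chi_X T_\zeta^{-1}$ and $T_\zeta\chi_Y$ being bounded, which is automatic for the bounded sets (balls, and later their coverings) actually used in the application, but is not obvious for general $X,Y$ as in the statement of the lemma. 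You instead interpret the right-hand side as a sesquilinear form on $C_c^\infty(X)\times C_c^\infty(Y)$, observe (correctly, using $(T_\zeta^{-1})^\ast=T_{\bar\zeta}$) that the matrix element $G(\zeta)$ is a scalar holomorphic function on the tube $\mathcal S_a^n$, apply the several-variables identity theorem coordinate by coordinate, and then upgrade to the operator identity by density and boundedness of $\chi_X U_t\chi_Y$. Your version is more robust: it gives the identity a precise meaning for arbitrary $X,Y$ and explicitly addresses the one genuinely delicate point (unboundedness of $T_\zeta$ for $\zeta\notin\mathbb R^n$), which the paper elides. The trade-off is a longer argument, and in the end both proofs feed into the same operator-norm estimate \eqref{YUtX: est}, which in any case requires $\chi_X T_\zeta^{-1}$, $T_\zeta\chi_Y$ bounded, so the extra generality is not exploited downstream. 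Either route is valid; yours is the more careful one.
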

\begin{proof} Using the invertibility of $T_\xi$ and the definition $U_{t,\xi}=T_\xi U_t T_\xi^{-1}$, we rewrite $\chi_X U_t\chi_Y$ as
\begin{align}
    \chi_XU_t\chi_Y=&\chi_X T_\xi^{-1} T_\xi U_t T_\xi^{-1}T_\xi \chi_Y= \chi_X T^{-1}_\xi U_{t,\xi} T_\xi\chi_Y,\qquad \forall\, \xi\in \mathbb R^n.\label{Ut: eq T1}
\end{align} 
The right-hand side of Eq.~\eqref{Ut: eq T1} has an analytic continuation in $\xi$ from $\mathbb R^n$ to $\mathcal S_a^n$ and is independent of $\xi\in \mathbb R^n$. Hence, its analytic continuation is independent of $\zeta\in \mathcal S_a^n$ and~\eqref{Ut: eq T} follows.
\end{proof}


Relation~\eqref{Ut: eq T} implies, for $\zeta\in \mathcal S_a^n$,
\begin{equation}
    \| \chi_XU_t \chi_Y\|\leq \| \chi_X T_\zeta^{-1}\|\| U_{t,\zeta}\|\|\chi_Y T_\zeta\|.\label{YUtX: est}
\end{equation}
Let $\zeta=\xi+i\mu b$, $\xi\in \mathbb R^n, \mu \in (0,a)$ and $b\in S^{n-1}$. Then estimate~\eqref{YUtX: est} implies 
\begin{equation}
     \|  \chi_X U_t\chi_Y\| \leq  e^{-\mu (r_X-r_Y)}\|U_{t,\zeta}\|,\label{eq: YUX}
\end{equation}
where
\begin{equation}
    r_Y:= \sup\limits_{y\in Y}b\cdot y\quad \text{ and }\quad r_X:= \inf\limits_{x\in X}b\cdot x.\label{eq: 2.10}
\end{equation}


We can cover $X$ and $Y$ by small balls. Hence, we begin with $Y=B_r(y_0)$ and $X=B_r(x_0)$ with $r=\frac{\epsilon}{2}d_{XY}$ for some $y_0\in Y$, $x_0\in X$ and $\epsilon\in (0,1)$.

Translate both balls by $y_0$ to place $y_0$ at the origin and $x_0$ at $x_0-y_0$. Let $S_{y_0}$ denote the corresponding shift operator. Then we have  
\begin{equation}
    S_{y_0}\chi_X U_t\chi_YS_{y_0}^{-1}=S_{y_0}\chi_XS_{y_0}^{-1}S_{y_0}U_tS_{y_0}^{-1}S_{y_0}\chi_YS_{y_0}^{-1}=\chi_{X^{y_0}}U_t^{y_0}\chi_{Y^{y_0}},\label{eq: 2.9}
\end{equation}
where $X^{y_0}=X-y_0=B_{r}(x_0-y_0)$, $Y^{y_0}=Y-y_0=B_r(0)$ and $U_t^{y_0}:=S_{y_0}U_tS_{y_0}^{-1}.$ Thus, it suffices to estimate the right-hand side of Eq.~\eqref{eq: 2.9}.

Now, we skip the superindex $y_0$, so we are back to Eq.~\eqref{Ut: eq T}, but with $X=B_r(x_0-y_0)$ and $Y=B_r(0)$.

Let $\zeta=\xi+i\mu b$, where $b=\frac{x_0-y_0}{|x_0-y_0|}$. (For each pair of points $x_0\in X$ and $y_0\in Y$, we choose a different analytic deformation $b$.) Since $|x_0-y_0|\geq d_{XY}$, by the definition of $r_X$ and $r_Y$, we have 
\begin{equation}
    r_X-r_Y\geq (1-\epsilon)|x_0-y_0|\geq (1-\epsilon)d_{XY}.\label{E10a}
\end{equation}

Eqs.~\eqref{eq: YUX} and~\eqref{E10a} yield 
\begin{align}
    \| \chi_X U_t\chi_Y\|\leq& 
   \|U_{t,\zeta}\| e^{-\mu(1-\epsilon) d_{x_0y_0}}.\label{1.32}
\end{align}

Note that the complex deformation $U_{t,\zeta}$ of the evolution operator changes from one pair of balls to another. The next proposition provides a uniform estimates of various such deformations. 
\begin{prop}\label{prop: E} Let Assumption~(A) be satisfied, and let $\mu\in (0,a)$ and $c$ be as in Eq.~\eqref{c}. Then we have the estimate
\begin{equation}
    \| U_{t,\xi+i\mu b}\|\leq e^{\mu tc},\qquad \forall\,\xi\in \mathbb R^n{, \quad b\in S^{n-1}}.\label{est: Utxi: zeta}
\end{equation}
\end{prop}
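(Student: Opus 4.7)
The plan is to reduce the operator-norm bound to a pointwise growth estimate via Gronwall's inequality, exploiting the fact that $\im H_\zeta$ is bounded on $\mathcal S_a^n$ by Assumption~(A). Concretely, I would fix $\zeta = \xi + i\mu b$ with $\xi\in\mathbb R^n$, $b\in S^{n-1}$, and $\mu\in(0,a)$, and work with $\psi_t := U_{t,\zeta}\psi$ for $\psi$ in the domain $D(H_\zeta)$. Since $H_\zeta = \re H_\zeta + i\im H_\zeta$ is a bounded perturbation of the self-adjoint operator $\re H_\zeta$, the group $U_{t,\zeta}$ constructed in Proposition~\ref{prop: 2.6} preserves $D(H_\zeta) = D(\re H_\zeta)$ and satisfies $\p_t \psi_t = -iH_\zeta \psi_t$ strongly.

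Given this, a direct differentiation of $\|\psi_t\|^2$ yields
\begin{equation*}
\deriv{}{t}\|\psi_t\|^2 \;=\; 2\re\langle \psi_t, -iH_\zeta \psi_t\rangle \;=\; -i\langle \psi_t, (H_\zeta - H_\zeta^*)\psi_t\rangle \;=\; 2\langle \psi_t, (\im H_\zeta)\psi_t\rangle.
\end{equation*}
Since $\im H_\zeta$ is bounded and self-adjoint, this is controlled by $2\sup(\im H_\zeta)\|\psi_t\|^2$, and the definition~\eqref{c} of $c$ specialized to $\zeta = \xi+i\mu b$ gives $\sup(\im H_\zeta) \le \mu c$. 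Therefore
\begin{equation*}
\deriv{}{t}\|\psi_t\|^2 \;\leq\; 2\mu c\,\|\psi_t\|^2,
\end{equation*}
and Gronwall's inequality applied to the scalar function $t\mapsto \|\psi_t\|^2$ yields $\|\psi_t\|\le e^{\mu c t}\|\psi\|$. Taking the supremum over unit vectors $\psi$ in the dense domain $D(H_\zeta)$ produces the claimed bound~\eqref{est: Utxi: zeta}.

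The main technical obstacle is justifying the differentiation: one must confirm that $\psi_t$ remains in $D(H_\zeta)$ and that $t\mapsto\|\psi_t\|^2$ is continuously differentiable with the stated derivative. This follows from the standard bounded-perturbation theory for semigroups (Kato or Pazy): since $i\im H_\zeta$ is a bounded perturbation of the self-adjoint generator $\re H_\zeta$, the Duhamel representation makes the domain invariance and strong differentiability transparent. The uniformity of the bound in the base point $\xi\in\mathbb R^n$ is automatic because the supremum defining $c$ in~\eqref{c} is already taken uniformly over all $\xi$ and all directions $b\in S^{n-1}$, so no additional estimate is needed to pass from a single $\zeta$ to the family.
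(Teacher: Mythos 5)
Your proof matches the paper's own argument step for step: differentiate $\|U_{t,\zeta}\psi\|^2$ in $t$, identify the derivative as $2\langle\psi_t,(\im H_\zeta)\psi_t\rangle$, bound this by $2\mu c\|\psi_t\|^2$ using $\sup\im H_{\xi+i\mu b}\le\mu c$ from the definition~\eqref{c}, and integrate (Gronwall). The extra care you take over domain invariance and strong differentiability via bounded-perturbation theory --- a point the paper passes over silently --- is sound and welcome but does not constitute a different route.
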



\begin{proof}[Proof of Proposition~\ref{prop: E}] Take $g\in L^2(\mathbb R^n)$. We denote $g_{t,\zeta}:=U_{t,\zeta}g, \zeta\in \mathcal S_a^n$, and compute 
\begin{align}
    \p_t\| g_{t,\zeta}\|^2=&-i\langle g_{t,\zeta}, (H_{\zeta}-H_{\zeta}^*)g_{t,\zeta} \rangle = 2\langle g_{t,\zeta}, \left(\im H_{\zeta}\right)g_{t,\zeta} \rangle.\label{E3: eq1}
\end{align}
\DETAILS{The second part of Condition~(A) implies that \st{for any $a'\in [0,a)$,} 
\begin{equation}\label{c'}
    \stkout{\sup\limits_{\zeta\in \mathcal S_{a'}^n} \sup\im H_\zeta <\infty.}\
     {\red c  = c (\mu)=\sup\limits_{\xi\in \mathbb R^n,\, b\in \mathcal S^{n-1}} \sup(\im H_{\xi+i\mu b})/{\mu}.}\end{equation}
{Indeed, since $\im H_{\zeta+\xi}=T_\xi \im H_\zeta T_\xi^{-1}\, \,\,$ for every $\xi\in \mathbb R^n$ and $\zeta\in \mathcal S_a^n$, we have that $\sup \im H_\zeta$ is independent of $\re \zeta$ and therefore}
\begin{equation}\label{ineq: 1.5}
{\sup\limits_{\stkout{\zeta\in \mathcal S_{a'}^n}\ {\red \xi\in \mathbb R^n,\, b\in \mathcal S^{n-1}}} \sup\im H_{\stkout{\zeta}\ {\red \xi+i\mu b}} =\sup\limits_{\stkout{\im \zeta \in [0,a']^n}\  {\red b\in \mathcal S^{n-1}}} \sup\im H_{\stkout{\zeta}\ {\red \xi+i\mu b}} <\infty. }
\end{equation}
{\red Hence,}}
For $\zeta=\xi+i\mu b$, by Eq.~\eqref{c}, 
we have, $\im H_\zeta\leq \mu c$.
This together with Eq.~\eqref{E3: eq1} implies
\begin{align}
    |\p_t \|g_{t,\zeta}\|^2|\leq & 2\mu c\|g_{t,\zeta}\|^2.\label{ptgtzeta: est}
\end{align}
Since $g_{t,\zeta}\vert_{t=0}=g$, this gives $ \|g_{t,\zeta}\|\leq \|g\| e^{\mu ct}$ yielding~\eqref{est: Utxi: zeta}. \end{proof}


Eqs.~\eqref{1.32} and~\eqref{est: Utxi: zeta} yield 
\begin{equation}
    \|\chi_X U_t\chi_Y\|\leq e^{\mu ct} e^{-\mu(1-\epsilon) d_{x_0y_0}}.\label{2.14a}
\end{equation}


Now, we return to arbitrary disjoint sets $X$ and $Y$. We cover $X$ and $Y$ by the balls, $ B_j^X=B_r(x_j), j=1,\cdots,N_1$ and $B_k^Y=B_k(y_k), k=1,\cdots,N_2$, in $\mathbb R^n$ of the radius $r=\frac{\epsilon}{2}d_{XY}$, centred at $x_j$ and $y_k$, respectively. $N_1$ and $N_2$ could be either finite or infinite. With this cover, we associate partitions of unity 
\begin{equation}
\chi_X=\sum\limits_{j=1}^{N_1} \chi_{j}^2\quad \text{ and }\chi_Y=\sum\limits_{k=1}^{N_2}\tilde\chi_{k}, \label{cpt3}
\end{equation}
where $\chi_j$ and $\tilde \chi_k$ satisfy~$\text{supp}(\chi_j)\subset B_r(x_j)$ and $\text{supp}(\tilde\chi_k)\subset B_r(y_k),$ $j=1,\cdots,N_1$ and $k=1,\cdots,N_2$. For each $g\in L^2(\mathbb R^n)$, we estimate, using Eqs.~\eqref{cpt3}, 
\begin{align}
    \| \chi_X e^{-itH}\chi_Yg\|^2= &\sum\limits_{k=1}^{N_1}\|\chi_{k}e^{-iHt}\chi_{Y}g\|^2 \leq &\sum\limits_{k=1}^{N_1}\left(\sum\limits_{j=1}^{N_2}\|\chi_{k}e^{-iHt}\tilde\chi_{j}g\|\right)^2,
\end{align}
where in the equality we used the first partition of unity and in the inequality, the second one.
By~\eqref{2.14a}, we have
\begin{align}
     \| \chi_X e^{-itH}\chi_Yg\|^2 \leq & C e^{2\mu ct} M(g),\label{last: eq}
\end{align}
where $M(g)$ is given by, with $\mu'=\mu(1-\epsilon)$, 
\begin{align}
    M(g):=&\sum\limits_{k=1}^{N_1} \left( \sum\limits_{j=1}^{N_2} e^{-\mu' d_{x_ky_j}} \|\tilde\chi_{j}g\|\right)^2\nonumber\\
    =&\sum\limits_{k=1}^{N_1}\sum\limits_{j=1}^{N_2}\sum\limits_{l=1}^{N_2} e^{-\mu'(d_{x_ky_j}+d_{x_ky_l})}\|\tilde\chi_{j}g\|\|\tilde\chi_{l}g\|.
\end{align}
To estimate $M(g)$, we use arithmetic mean inequality to obtain
\begin{align}
    M(g)\leq \sum\limits_{k=1}^{N_1}\sum\limits_{j=1}^{N_2}\sum\limits_{l=1}^{N_2}  e^{-\mu'(d_{x_ky_j}+d_{x_ky_l})}\left(\frac{\|\tilde\chi_{j}g\|^2+\|\tilde\chi_{l}g\|^2}{2}\right).\label{Jul.31.eq1}
\end{align}
By the symmetry with respect to $j$ and $l$ in the right-hand side of~\eqref{Jul.31.eq1}, \eqref{Jul.31.eq1} implies 
\begin{align}
M(g)\leq&\sum\limits_{k=1}^{N_1}\sum\limits_{j=1}^{N_2}\sum\limits_{l=1}^{N_2}  e^{-\mu'(d_{x_ky_j}+d_{x_ky_l})} \|\tilde\chi_{j}g\|^2= \sum\limits_{j=1}^{N_2}\|\tilde \chi_{j}g\|^2C_{XY},\label{last: eq2}
\end{align}
where 
\begin{equation}
C_{XY}:=\sum\limits_{k=1}^{N_1}\sum\limits_{l=1}^{N_2} e^{-\mu'(d_{x_ky_j}+d_{x_ky_l})} .\label{def: CXY}
\end{equation}

First, we sum over $l$. To this end, for each $k$, we decompose $\mathbb R^d$ into the spherical shells
\begin{equation}\label{shell}
    \Lambda_{m}(x_k)=\{ x\,:\, r_m \leq d_{xx_k} <r_{m+1} \},
\end{equation}
where $m=0,\cdots,$ $r_m=r_0+m\epsilon d_{XY}, $ with $r_0=(1-\epsilon)d_{XY}$. We sum first over the balls covering a given shell and then over the shells. Each shell $\Lambda_m(x_k)$ is covered by at most $\tilde N_m$ balls $B_l^Y$, with $\tilde N_m= C_n r_m^{n-1}$ for some constant $C_n>0$. This gives
\begin{align}
     C_{X,Y}\leq & \sum\limits_{k=1}^{N_1}\sum\limits_{m=0}^\infty C_n r_{m}^{n-1}e^{-\mu'(d_{x_ky_j}+r_{m})}.
\end{align}

To evaluate the sum over the shells, we use that
\begin{align}
    \sum\limits_{m=0}^\infty r_m^{n-1} e^{-\mu' r_m}\leq & C\sum\limits_{m=0}^\infty e^{-(1-\epsilon/2)\mu'r_m}\nonumber\\
    \leq& Ce^{-(1-\epsilon/2)\mu'r_0}=Ce^{-(1-\epsilon/2)(1-\epsilon)\mu'd_{XY}}.
\end{align}
We conclude that there is a $C_{n,\epsilon,\mu'}>0$ depending on $n$, $\epsilon$ and $\mu',$ s.t.
\begin{align}
  C_{X,Y}\leq C_{n,\epsilon,\mu'} \sum\limits_{k=1}^{N_1} d_{XY}^{n-1}e^{-\mu'(d_{x_ky_j}+d_{XY})}.\label{eq: CXY1}
\end{align}
Next, to estimate the sum over $x_k$, we introduce the spherical shells centered at $y_j$ 
\begin{equation}
    \Lambda_{m}(y_j)=\{ x\,:\, r_m \leq d_{xy_j} <r_{m+1} \},
\end{equation}
where $r_m=r_0+m\epsilon d_{XY}, m=0,\cdots,$ with $r_0=(1-\epsilon)d_{XY}$. Similarly, following~\eqref{eq: CXY1}, we obtain 
\begin{equation}\label{shell2}
    C_{X,Y}\leq C_{n,\epsilon,\mu'}^2 d_{XY}^{2(n-1)} e^{-2\mu'd_{XY}}.
\end{equation}
\par
This, together with Eqs.~\eqref{last: eq2} and~\eqref{def: CXY}, implies 
\begin{equation}
    M(g)\leq  C_{n,\epsilon,\mu'}^2 d_{XY}^{2(n-1)} e^{-2\mu'd_{XY}}\|g\|^2.\label{est: Mg}
\end{equation}
Therefore, using~\eqref{last: eq}, and~\eqref{est: Mg}, we conclude that with $\mu''=(1-2\epsilon)\mu$,
\begin{equation}
    \| \chi_Xe^{-itH}\chi_Yg\| \leq C_{n,\epsilon,\mu'} C_{n,\epsilon,\mu}'e^{\mu ct}e^{-\mu''d_{XY}}\|g\|,\label{cpt2}
\end{equation}
where $C_{n,\epsilon,\mu}':=\sup\limits_{u\geq 0} u^{(n-1)} e^{-\epsilon \mu u}.$ Estimate~\eqref{cpt2} yields 
\begin{equation}
    \|\chi_X U_t \chi_Y\|\leq C  e^{-\mu'' (d_{XY}-c't)},\qquad \mu''=(1-2\epsilon)\mu,\,\, c'=\frac{c}{1-2\epsilon}\label{cpt23}
\end{equation}
with the constant $C$ depending on $\epsilon=1-\frac{\mu''}{\mu}$, $n$ and $\mu$.
This{, with $\mu''$ replaced by $\mu'$,} implies~\eqref{MVB}. 
\end{proof}

\section{
Localization of observables and proof of Theorem~\ref{thm:LRB}}\label{sec:localization}


 We introduce a mathematically convenient notion of localized observables. We say that an observable $A$ is \emph{localized} in $X$ if 
\begin{align}\label{obs-loc}
	A \chi_{X^c}= \chi_{X^c} A=0,\, \text{ or }\,\ A =\chi_{X}A \chi_{X}.
\end{align}
\par Since $\chi_{X^c}=\mathbbm{1}-\chi_X$, Eq.~\eqref{A=Ax+Axc} implies, for any operator $A$ acting on $X$,
\begin{equation}\label{A=tA+1}
    A=\tilde A_X+\mathbbm{1},\qquad \text{where $\tilde A_X=\chi_XA\chi_X-\chi_{X}$. }
\end{equation}
By the definition, the observable $\tilde A_X$ is localized in $X$.

\begin{proof}[Proof of Theorem~\ref{thm:LRB}]\label{sec:LRB}

Since $A$ and $B$ act on $X$ and $Y$, respectively, by~\eqref{A=tA+1}, they are of the form $A=\tilde A_X+\mathbbm{1}$ and $B=\tilde B_Y+\mathbbm 1$, where $\tilde A_X$ and $\tilde B_Y$ are localized in $X$ and $Y$, respectively. Let $A_t=\alpha_t(A)$ and $\tilde A_{t,X}=\alpha_t(\tilde A_X)$. Then $A_t=\tilde A_{t,X}+\mathbbm 1$ and 
\begin{align}
    [A_t, B]=& [\tilde A_{t,X}, \tilde B_Y].\label{LRB: decom}
\end{align}
Using that $\tilde A_X=\chi_X \tilde A_X \chi_X$ and $\tilde B_Y=\chi_Y \tilde B_Y \chi_Y$ and using Theorem~\ref{thm:MVB}, we obtain 
\begin{align}
     \|[\tilde A_{t,X}, \tilde B_Y] \|\leq &\|\tilde A_X \chi_X e^{-itH}\chi_Y \tilde B_Y\|+\|\tilde B_Y \chi_Y e^{itH}\chi_X \tilde A_X\|\nonumber\\
     \leq& Ce^{-\mu (d_{XY}-ct)}\|\tilde A_X\|\|\tilde B_X\|.\label{eq: est: AB1}
\end{align}
Since $X^c\neq \emptyset$ and since $\|Au\|=\|u\|$ for $u$ supported in $X^c$, we have $\|A\|\geq 1$. Hence, 
\begin{equation}
    \|\tilde A_X\|\leq \|\chi_X A\chi_X\|+\|\chi_X\|\leq \|A\|+1\leq 2\|A\|
\end{equation}
and similarly for $\tilde B$. These inequalities together with~\eqref{eq: est: AB1} yield
\begin{equation}
     \|[\tilde A_{t,X}, \tilde B_Y] \|\leq Ce^{-\mu (d_{XY}-ct)}\|A\|\|B\|.\label{eq: 2.7}
\end{equation}

Relations~\eqref{LRB: decom} and~\eqref{eq: 2.7} yield~\eqref{LRB'}.
\end{proof}








\section{MVB for evolution of observables: Proof of Theorem~\ref{thm: LCA}}\label{sec: MVB}

\begin{proof}[Proof of Theorem~\ref{thm: LCA}]
Since $A$ acts on $X,$ by~\eqref{A=tA+1}, it can be be written as $A=\tilde A_X+\mathbbm1$, where the operator $\tilde A_X$ is localized in $X$, i.e. satisfies $\tilde A_X=\chi_X \tilde A_X\chi_X$. Hence the operator family $A_{t,X_\eta}$ defined in~\eqref{def: AtY} can be written as 
\begin{equation}
    A_{t,X_\eta}:=\chi_{X_\eta}\alpha_t(\tilde A_X)\chi_{X_\eta}+\mathbbm 1. 
\end{equation}
By the definition,~$A_{t,X_\eta}$ acts on $X_\eta$. To prove~\eqref{aprxAt}, we use that $A_t=\alpha_t(\tilde A_X+\mathbbm1)=\alpha_t(\tilde A_X)+\mathbbm 1$, to write 
\begin{align}
    A_t-A_{t,X_\eta}=&\alpha_t(\tilde A_X)-\chi_{X_\eta}\alpha_t(\tilde A_X)\chi_{X_\eta}\nonumber\\
    =&\chi_{X_\eta}\alpha_t(\tilde A_X)\chi_{X_\eta^c}+\chi_{X_\eta^c}\alpha_t(\tilde A_X).
\end{align}
Using this relation and Theorem~\ref{thm:MVB}, we arrive at~\eqref{aprxAt}.
\end{proof}
\begin{remark} Theorem~\ref{thm: LCA} yields a natural (but slightly longer) proof of Theorem~\ref{thm:LRB}: $A_{t,X_\eta}$ commutes with $B$ as long as $\eta<d_{XY}$ and therefore $[A_t,B_t]=[R_t^A,B]$, where $R_t^A=A_t-A_{t,X_\eta}$, which leads to an estimate of $[A_t,B_t]$ through an estimate of the remainder $R_t^A$. 
    
\end{remark}

\medskip

\appendix

\section{Differentiability and power estimates}\label{sec: DEs}

In this appendix, we consider a self-adjoint operator $H$ on $L^2(\Lambda)$ under a weaker assumption than the analyticity assumption (A) of the Introduction.

With the definition~\eqref{def: Txi} and $H_\xi=T_\xi H T_\xi^{-1}$, $\xi\in \mathbb R^n$, we assume

(Diff) The family $H_\xi, \xi\in \mathbb R^n$, is $m$ times differentiable, with all derivatives yielding bounded operators.

{W}e define the number $\tilde c$:
\begin{align}
   \tilde c:=&\sum\limits_{k=1}^m \frac{1}{k!} \re (i\mu)^{k-1}\sup\limits_{b\in S^{n-1}} \sup(b\cdot \nabla_\xi)^k H_{\xi}{\big|_{\xi=0}}. \label{def: tc}
\end{align}

\begin{thm}\label{thm:MVB-diff} Suppose that Assumption~(Diff) hold for some $m\geq n$ and let $X$ and $Y$ be two bounded, disjoint sets. Then, for every $ \tilde c'>\tilde c$, there exists a constant $C>0$, depending on $\tilde c'-\tilde c$ and $n$, such that 
   \begin{equation} \label{MVB'} 
 \|\chi_{_{X}}\,e^{-iH t} \chi_{_{Y}}\|\leq \, C  tM (d_{XY}-t\tilde c')^{-m-1+n}, 
 \end{equation}
for all $ 1\leq t\le d_{XY}/\tilde c' $, where constant $M$ is given by 
\begin{equation}
  M:=1+ \sup\limits_{b\in S^{n-1}}\| (b\cdot \nabla_\xi)^{m+1} H_\xi\|.\label{def: M}
\end{equation}
\end{thm}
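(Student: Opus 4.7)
The plan is to adapt the deformation argument of Section~\ref{sec:pfMVB}, replacing the analytic continuation of $H_\xi$ to complex $\xi$ by a Taylor polynomial of order $m$. Since only $C^{m+1}$ regularity of $H_\xi$ is available, the full Combes-Thomas rotation $e^{\mu b\cdot x}He^{-\mu b\cdot x}$ does not define a bounded evolution, but its $m$-th Taylor polynomial in $\mu$ is a bounded perturbation of $H$; comparing the two via Duhamel and controlling the remainder by the $(m+1)$-st directional derivative bound $M$ trades the exponential decay of Theorem~\ref{thm:MVB} for polynomial decay once the deformation parameter $\mu$ is optimized.

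Concretely, for $\mu>0$ and $b\in S^{n-1}$, I would define
\[P_m(\mu b):=\sum_{k=0}^{m}\frac{(i\mu)^k}{k!}(b\cdot\nabla_\xi)^k H_\xi\big|_{\xi=0}\]
together with the associated truncated evolution $\tilde U_{t,\mu b}:=e^{-iP_m(\mu b)t}$. Since the definition of $\tilde c$ in~\eqref{def: tc} gives $\sup_b\sup\im P_m(\mu b)\le\mu\tilde c$, the Hille-Yosida argument of Proposition~\ref{prop: E} yields $\|\tilde U_{t,\mu b}\|\le e^{\mu\tilde c t}$. For a pair of small balls $B_r(x_0)\subset X$, $B_r(y_0)\subset Y$ with $b=(x_0-y_0)/|x_0-y_0|$, I would then establish an approximate Combes-Thomas identity
\[\chi_{B_r(x_0)}e^{-iHt}\chi_{B_r(y_0)}=\chi_{B_r(x_0)}e^{-\mu b\cdot x}\,\tilde U_{t,\mu b}\,e^{\mu b\cdot x}\chi_{B_r(y_0)}+\mathcal{E}(t,\mu),\]
whose main term is controlled by $e^{-\mu(d_{x_0,y_0}-\tilde c t-O(r))}$ via the weighted estimates on the cutoffs together with the bound above on $\tilde U_{t,\mu b}$, and whose Duhamel error $\mathcal{E}(t,\mu)$ has integrand proportional to the Taylor remainder $e^{\mu b\cdot x}He^{-\mu b\cdot x}-P_m(\mu b)$; using the integral form of Taylor's remainder together with the identity $\text{ad}_{x\cdot b}^k H=i^k(b\cdot\nabla_\xi)^k H_\xi|_{\xi=0}$, this error will satisfy $\|\mathcal{E}(t,\mu)\|\lesssim tM\mu^{m+1}e^{-\mu(d_{XY}-\tilde c' t)}$ for any $\tilde c'>\tilde c$.

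Optimizing $\mu>0$ via $\sup_{\mu>0}\mu^{m+1}e^{-\mu\alpha}\lesssim_m\alpha^{-(m+1)}$ (attained at $\mu=(m+1)/\alpha$) produces a per-pair-of-balls estimate of order $tM(d_{XY}-\tilde c' t)^{-(m+1)}$. Summing over a cover of $X$ and $Y$ by balls of radius $r\sim\epsilon(d_{XY}-\tilde c' t)$ via the spherical shell decomposition of the second half of Section~\ref{sec:pfMVB}, the shell count $\sim r_m^{n-1}$ against the polynomial weight $r_m^{-(m+1)}$ yields a sum $\sum_m r_m^{n-m-2}$ that converges for $m\ge n$ and is dominated by $r_0^{n-m-1}\sim(d_{XY}-\tilde c' t)^{n-m-1}$, producing the claimed bound $CtM(d_{XY}-\tilde c' t)^{n-m-1}$. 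The main obstacle will be making the Duhamel remainder rigorous: because $e^{\mu b\cdot x}He^{-\mu b\cdot x}$ is not densely defined as a bounded operator, the comparison has to be carried out on states cut off by $\chi_{B_r(y_0)}$ (on which $e^{\mu b\cdot x}$ is bounded), and the resulting weighted Taylor-remainder estimate must be reduced entirely to the bound $\|(b\cdot\nabla_\xi)^{m+1}H_\xi\|\le M$ while carefully tracking the weights accumulated inside the Duhamel integral.
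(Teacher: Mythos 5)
Your high-level mechanism --- truncate the Combes--Thomas deformation at order $m$, control the Taylor error through the $(m+1)$-st derivative bound $M$, exploit the tradeoff $\mu^{m+1}e^{-\mu\alpha}\lesssim\alpha^{-m-1}$, and run the shell-covering argument that converges for $m\ge n$ --- is the right one, and the truncated generator $P_m(\mu b)$, the evolution $\tilde U_{t,\mu b}$, and the Hille--Yosida bound $\|\tilde U_{t,\mu b}\|\le e^{\mu\tilde c t}$ are exactly the paper's $\tilde H_{\xi^z}$, $\tilde U_{t,\xi^z}$ and estimate~\eqref{fm+10}. However, the step where you produce the error $\mathcal{E}(t,\mu)$ by a Duhamel formula \emph{in $t$} against the Taylor remainder $e^{\mu b\cdot x}He^{-\mu b\cdot x}-P_m(\mu b)$ has a genuine gap, and the fix you flag at the end (restricting to states cut off by $\chi_{B_r(y_0)}$) does not close it. Under assumption (Diff), $H_{i\mu b}=e^{\mu b\cdot x}He^{-\mu b\cdot x}$ is not merely unbounded: it does not exist as an operator, because it is the analytic continuation of $H_\xi$ to a complex point, which is precisely the hypothesis that has been discarded. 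Even formally, the Duhamel integrand at intermediate time $s\in(0,t)$ sandwiches the ill-defined difference of generators between propagators; the initial cutoff $\chi_Y$ is lost after the propagation by $\tilde U_s$ (or $e^{-iHs}$), and the weights $e^{\mp\mu b\cdot x}$ recombine with $\chi_X$ and $\chi_Y$ only at the endpoints $s=0,t$, not inside the integral.

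The paper's Proposition~\ref{prop: 3.1} avoids this by never differentiating in $t$ against an ill-defined generator. It sets $g(\lambda):=\chi_XT_{\xi^\lambda}^{-1}U_{t,\xi^\lambda}T_{\xi^\lambda}\chi_Y$ (which equals $\chi_Xe^{-iHt}\chi_Y$ identically for every real $\lambda$) and the almost-analytic extension $\tilde g(z):=\chi_XT_{\xi^z}^{-1}\tilde U_{t,\xi^z}T_{\xi^z}\chi_Y$. The crucial point is that $\tilde H_{\xi^z}$ is the Taylor polynomial centred at the \emph{real} point $\xi^\lambda$, hence a \emph{bounded} perturbation of the self-adjoint $H_{\xi^\lambda}$; Hille--Yosida then produces $\tilde U_{t,\xi^z}$ with no analyticity required, while $\chi_XT_{\xi^z}^{-1}$ and $T_{\xi^z}\chi_Y$ are bounded because $X,Y$ are bounded sets. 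Since $\tilde g$ agrees with $g$ on the real axis and is independent of $\re z$, the one-variable $\bar\partial$-calculus of Lemmas~\ref{lem: 4.1} and~\ref{Lem: 4.2} gives $g(\lambda)=\tilde g(z)+\mathrm{Rem}$ with $\mathrm{Rem}$ an integral of $\bar\partial_z\tilde g$ over the vertical segment from $\lambda$ to $\lambda+i\mu$. The Duhamel formula that does appear computes $\bar\partial_z\tilde U_{t,\xi^z}$ with respect to $z$, and its integrand $\frac{(i\mu)^m}{m!}\,\p_\lambda^{m+1}H_{\xi^\lambda}$ is bounded by hypothesis; the polynomial decay $(d_{x_0y_0}-\epsilon-t\tilde c)^{-m-1}$ emerges from $\int_0^\mu s^m e^{-s(d_{x_0y_0}-\epsilon-t\tilde c)}\,ds$ at the \emph{fixed} value $\mu=\tfrac12$, replacing your optimization over $\mu$.
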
 
\begin{remark} For Hamiltonians of form~\eqref{H}, condition (Diff) follows from the condition
\begin{equation}
\label{om-cond}\text{ $|\p^\alpha\om(k)|\ls 1$ for $1\leq |\alpha|\leq m+1$ for some $m\geq 1$.}
\end{equation}
    
\end{remark}
The proof of Theorem~\ref{thm:MVB-diff} is based on the following proposition.

\begin{prop}\label{prop: 3.1} Let $X=B_r(x_0)$ and $Y=B_r(y_0)$ with $r=\epsilon/2, \epsilon\in (0,1)$, and let $\xi^z=zb+\xi$ with $z=\lambda+i\mu\in \mathbb C^+$ and $b=\frac{x_0-y_0}{|x_0-y_0|}\in S^{n-1}$. For $H_\xi$, $m+1$ times boundedly differentiable, instead of~\eqref{Ut: eq T}, we have for all $\mu\in (0,1)$ and $d_{x_0y_0}-\epsilon-t\tilde c\geq 0$,
\begin{equation}
    \chi_XU_t \chi_Y=\chi_X T^{-1}_{\xi^z} \tilde U_{t,\xi^z} T_{\xi^z} \chi_Y +Rem, \label{re: eq}
\end{equation}
where $\tilde U_{t,\xi^z}$ is the almost analytic extension of $U_{t,\xi^z}$ defined as
\begin{equation}
    \tilde U_{t,\xi^z}=e^{-i\tilde H_{\xi^z}t},\label{def: tUtxiz}
\end{equation}
where 
\begin{equation}
    \tilde H_{\xi^z}=\sum\limits_{k=0}^m \frac{1}{k!} (b\cdot \nabla_\xi)^k H_{\xi^\lambda}(i\mu)^k,\label{def: tildH}
\end{equation}
and $Rem$ is a bounded operator satisfying
\begin{equation}
\|Rem \| \lesssim_\mu \frac{tM}{(d_{x_0y_0}-\epsilon-t\tilde c)^{m+1}},\label{est: Remmu}
\end{equation}
with $\tilde c$ and $M$ defined in Eqs.~\eqref{def: tc} and~\eqref{def: M}, respectively.

\end{prop}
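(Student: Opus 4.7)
The plan is to adapt the exact conjugation identity of Lemma~\ref{lem: 2.3} to the merely $(m+1)$-times differentiable setting, with the remainder $\Rem$ arising from the failure of the Taylor polynomial $\tilde H_{\xi^z}$ to be a genuine analytic extension of $H_{\xi^\lambda}$. Starting from the real identity
\begin{equation*}
\chi_X U_t \chi_Y = \chi_X T_{\xi^\lambda}^{-1} U_{t,\xi^\lambda} T_{\xi^\lambda} \chi_Y,
\end{equation*}
valid for any real $\lambda$ by unitarity of $T_{\xi^\lambda}$, and using that~\eqref{def: tildH} collapses to $H_{\xi^\lambda}$ at $\mu=0$ (so $\tilde U_{t,\xi^z}|_{\mu=0}=U_{t,\xi^\lambda}$), I would introduce
\begin{equation*}
F(\lambda,\mu) := \chi_X\, T_{\xi^{\lambda+i\mu}}^{-1}\, \tilde U_{t,\xi^{\lambda+i\mu}}\, T_{\xi^{\lambda+i\mu}}\,\chi_Y,
\end{equation*}
a bounded operator on $L^2$ because the exponential factors $e^{\mp\mu(b\cdot x)}$ hidden in $T_{\xi^z}^{\pm 1}$ are absorbed by the bounded supports of $\chi_X$ and $\chi_Y$. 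Then $\Rem = F(\lambda,0) - F(\lambda,\mu) = -\int_0^\mu \partial_s F(\lambda,s)\,ds$, and the task reduces to computing and estimating $\partial_s F$.

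The crucial observation, which replaces the direct appeal to analyticity, is that $\partial_\lambda F \equiv 0$. Applied term by term to~\eqref{def: tildH}, the elementary relation $[(b\cdot x),H_\xi]=i(b\cdot\nabla_\xi)H_\xi$ yields
\begin{equation*}
[(b\cdot x),\tilde H_{\xi^z}] = i\,\partial_\lambda \tilde H_{\xi^z},
\end{equation*}
and a Duhamel computation of $[(b\cdot x),\tilde U_{t,\xi^z}]$ then gives $i[(b\cdot x),\tilde U_{t,\xi^z}] = -\partial_\lambda\tilde U_{t,\xi^z}$. Feeding this into the product rule for $\partial_\lambda(T_{\xi^z}^{-1}\tilde U_{t,\xi^z} T_{\xi^z})$, together with $\partial_\lambda T_{\xi^z}^{\pm 1} = \mp i(b\cdot x) T_{\xi^z}^{\pm 1}$, makes the three contributions cancel pairwise. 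Since $\partial_\lambda F=0$, we have $\bar\partial_z F = \tfrac{i}{2}\partial_\mu F$ and hence $\partial_\mu F = -2i\,\chi_X T_{\xi^z}^{-1}(\bar\partial_z \tilde U_{t,\xi^z}) T_{\xi^z}\chi_Y$; one more Duhamel expansion together with the explicit almost-analytic identity
\begin{equation*}
2\,\bar\partial_z \tilde H_{\xi^z} = \frac{(i\mu)^m}{m!}(b\cdot\nabla_\xi)^{m+1} H_{\xi^\lambda},
\end{equation*}
immediate from~\eqref{def: tildH}, expresses $\partial_s F$ in a closed form whose only small factor is the $O(s^m M)$ almost-analytic defect.

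The remaining bounds are routine. Writing $T_{\xi^z}^{-1} = T_{\xi^\lambda}^{-1} e^{-\mu(b\cdot x)}$ with $T_{\xi^\lambda}$ unitary yields $\|\chi_X T_{\xi^z}^{-1}\|\leq e^{-\mu r_X}$ and $\|T_{\xi^z}\chi_Y\|\leq e^{\mu r_Y}$ with $r_X, r_Y$ as in~\eqref{eq: 2.10}; the choices $b=(x_0-y_0)/|x_0-y_0|$ and $r=\epsilon/2$ force $r_X - r_Y \geq d_{x_0 y_0}-\epsilon$. Applying the Gr\"onwall argument of Proposition~\ref{prop: E} verbatim to $\tilde H_{\xi^z}$, via $\|\mathrm{Im}\,\tilde H_{\xi^z}\|\leq \mu\tilde c$ (which is exactly the definition~\eqref{def: tc} of $\tilde c$), gives $\|e^{-i\tilde H_{\xi^z}t}\|\leq e^{\mu t\tilde c}$. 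Combining,
\begin{equation*}
\|\partial_s F(\lambda,s)\| \leq \frac{t M s^m}{m!}\, e^{-s(d_{x_0 y_0}-\epsilon - t\tilde c)},
\end{equation*}
and extending the $s$-integration to $[0,\infty)$ and using $\int_0^\infty s^m e^{-sL}\,ds = m!/L^{m+1}$ with $L:=d_{x_0 y_0}-\epsilon-t\tilde c\geq 0$ produces exactly~\eqref{est: Remmu}.

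The main obstacle I anticipate is establishing $\partial_\lambda F\equiv 0$ despite $\tilde H_{\xi^z}$ being only the $m$-th Taylor truncation; it works because the same order $m$ appears on both sides of $[(b\cdot x),\tilde H_{\xi^z}] = i\partial_\lambda \tilde H_{\xi^z}$, and the Duhamel translation to the semigroup preserves this cancellation. A secondary technicality is that the multiplications $e^{\pm\mu(b\cdot x)}$ implicit in $T_{\xi^z}^{\pm 1}$ are unbounded, so all formal manipulations must be carried out on a suitable common dense domain, with bounded-operator norms recovered only after sandwiching by $\chi_X$ and $\chi_Y$.
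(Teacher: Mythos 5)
Your proposal is correct and takes essentially the same route as the paper. You define $F(\lambda,\mu)=\chi_X T_{\xi^{\lambda+i\mu}}^{-1}\tilde U_{t,\xi^{\lambda+i\mu}}T_{\xi^{\lambda+i\mu}}\chi_Y$, which is exactly the paper's $\tilde g(z)$; your use of the fundamental theorem of calculus in $\mu$ after establishing $\partial_\lambda F\equiv 0$ is precisely Lemma~\ref{lem: 4.1} applied to $\tilde g$, and the subsequent Duhamel expansion of $\bar\partial_z\tilde U_{t,\xi^z}$ together with the deformed-semigroup bound $\|e^{-i\tilde H_{\xi^z}t}\|\le e^{\mu t\tilde c}$ and the weights $\|\chi_X T_{\xi^z}^{-1}\|\le e^{-\mu r_X}$, $\|T_{\xi^z}\chi_Y\|\le e^{\mu r_Y}$ reproduce the paper's estimate chain~\eqref{eq: remark1}--\eqref{Rem: est} verbatim. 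The one place you deviate is in proving $\re z$-independence of $\tilde g$: the paper conjugates by finite real translations $T_{\eta}$, $\eta=\alpha b$, while you differentiate and show the three Leibniz contributions cancel via $[b\cdot x,\tilde H_{\xi^z}]=i\partial_\lambda\tilde H_{\xi^z}$ and Duhamel; this is the infinitesimal version of the same observation, equivalent but involving the unbounded multiplication $b\cdot x$ in intermediate steps (a domain technicality you correctly flag, and which the paper's finite-translation argument sidesteps). Your bookkeeping with the $\tfrac12$-normalized $\bar\partial_z$ is internally consistent and the factor of $2$ cancels, recovering the same final bound; extending the $s$-integral to $[0,\infty)$ even gives a $\mu$-independent constant, slightly sharper than the paper's stated $\lesssim_\mu$.
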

\begin{remark} (i) For $\tilde H_{\xi^z}$ given by~\eqref{def: tildH}, we have 
\begin{equation}
\tilde c=\sup\limits_{b\in S^{n-1}} \sup \frac{\im \tilde H_{\xi^z}}{\mu}.
\end{equation}
(ii) For $\mu\in (0,1)$ sufficiently small, and $b\in S^{n-1}$, we have
\begin{align}
     \tilde c= & \sup\limits_{b\in S^{n-1}} \sup \left(b\cdot \nabla_\xi H_{\xi^\stkout{\lambda}}\right)+O(\mu).
\end{align}
(iii) We can also consider the speed $\tilde c(b)=\sup (\im \tilde H_{\xi^z})/\mu$ in a direction $b\in S^{n-1}$. 
\end{remark}

We derive this proposition from the following two lemmas. 
\begin{lemma}\label{lem: 4.1} Let $f(z)$ be a differentiable function in the strip $\mathcal S_a$, for some $a$, which is independent of $\re z$. Then 
\begin{equation}
    f(z)=f(x)-\int_0^y (\bar\p_z f)(x+is)ds,\label{eq: fz}
\end{equation}
where $z=x+y$. 
\end{lemma}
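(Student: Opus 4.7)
Lemma~\ref{lem: 4.1} is a fundamental theorem of calculus identity along the vertical segment from $x$ to $z=x+iy$, rewritten in the language of the Wirtinger $\bar\partial$ operator.

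My plan is the following. Since $\mathcal S_a$ is a horizontal strip (its shape is independent of $\re z$), the segment $\gamma(s):=x+is$, $s\in[0,y]$, lies entirely in $\mathcal S_a$, so the scalar function $g(s):=f(x+is)$ is differentiable on $[0,y]$. The real-variable FTC then gives
$$f(z)-f(x)=g(y)-g(0)=\int_0^y g'(s)\,ds,$$
and the chain rule identifies $g'(s)=(\partial_y f)(x+is)$, where $\partial_y$ denotes the partial derivative in the imaginary direction of the complex variable.

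It then remains to rewrite $\partial_y f$ in terms of $\bar\partial_z f$ alone. Using the Wirtinger identity $\partial_y=i(\partial_z-\bar\partial_z)$ (together with $\partial_x=\partial_z+\bar\partial_z$), the right-hand side of the previous display becomes an integral of a linear combination of $\partial_z f$ and $\bar\partial_z f$. The $\partial_z$ contribution is eliminated by the hypothesis that $f$ is independent of $\re z$, i.e.\ $\partial_x f\equiv 0$, which forces $\partial_z f=-\bar\partial_z f$. Inserting this back yields $\partial_y f$ as a scalar multiple of $\bar\partial_z f$; substituting into the FTC display and absorbing the numerical coefficient into the normalization of $\bar\partial_z$ used in the paper produces the stated identity.

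This is essentially a one-line argument once the vertical-path parametrization is chosen; the only thing to keep track of is the book-keeping of Wirtinger signs and factors of $i$. I do not expect any real obstacle beyond matching these to the paper's conventions, after which the identity is a direct consequence of the real-variable FTC along $\gamma$.
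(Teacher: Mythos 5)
Your approach is essentially the paper's: parametrize the vertical segment, apply the real-variable fundamental theorem of calculus to $g(s):=f(x+is)$, and then use $\partial_x f\equiv 0$ to trade $\partial_y f$ for $\bar\partial_z f$. The paper's proof is exactly these three steps, so on the level of ideas there is no difference.

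There is, however, a convention mismatch that you note but do not actually resolve, and it is worth being precise about because it points to an error in the stated lemma. The paper explicitly uses the \emph{unnormalized} operator $\bar\partial_z=\partial_x+i\partial_y$ (no factor $\tfrac12$), which you can read off from the proof of Lemma~\ref{Lem: 4.2}. With that convention and $\partial_x f=0$ one gets $\bar\partial_z f=i\,\partial_y f$, i.e.\ $\partial_y f=-i\,\bar\partial_z f$; combining with FTC,
\begin{equation}
f(z)=f(x)+\int_0^y(\partial_y f)(x+is)\,ds=f(x)-i\int_0^y(\bar\partial_z f)(x+is)\,ds,
\end{equation}
which carries a factor $-i$, not $-1$. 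You computed $-2i$ with the standard Wirtinger normalization and proposed to ``absorb the numerical coefficient into the normalization''; a change of normalization can absorb the $2$ but not the $i$, so that remark does not finish the argument. What is actually going on is that the statement of Lemma~\ref{lem: 4.1} contains a typo ($-1$ where $-i$ should stand), and the paper's own proof has a compensating sign slip in its FTC line. The version with $-i$ is the one consistent with how the lemma is subsequently used in Eq.~\eqref{rem}, where $Rem=i\int_0^\mu(\bar\partial_z\tilde g)(x+is)\,ds$. So: your derivation is correct, the approach coincides with the paper's, but you should state the resulting factor of $i$ explicitly and flag the discrepancy with the printed identity rather than wave at ``normalization.''
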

\begin{proof} Let $z=x+iy$. By the fundamental theorem of Calculus, we have 
\begin{equation}
    f(z)=f(x)-i \int_0^y (\p_y f) (x+is)ds.
\end{equation}
Furthermore, since $f(z)$ is independent of $x$, we have $\p_x f(x+iy)=0$. These two relations and the definition $\bar\p_z=\p_x+i\p_y$ imply~\eqref{eq: fz}. \end{proof}

\begin{lemma}\label{Lem: 4.2} For any $f\in C^{m+1}(\mathbb R)$, define an almost analytic extension of $f$ as 
\begin{equation}
    \tilde f(z)= \sum\limits_{k=0}^m
 f^{(k)}(x) \frac{(iy)^k}{k!},\label{fz: expan}
 \end{equation}
 where $f^{(k)}=\frac{d^k}{dx^k}f$ and $z=x+iy$. Then $\tilde f $ satisfies the estimate 
 \begin{equation}
    |\bar \p_z \tilde f(z)|\leq \frac{1}{m!} |f^{(m+1)}(x)| |y|^m.\label{pbf}
\end{equation}
\end{lemma}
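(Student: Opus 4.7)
The plan is to carry out a direct term-by-term computation of $\bar\partial_z\tilde f=\partial_x\tilde f+i\partial_y\tilde f$ on the defining truncated series and observe that the resulting sum telescopes, leaving only the top-order remainder. Because $\tilde f$ is constructed precisely as the Taylor polynomial in the imaginary part $iy$ of order $m$ about the real axis, this kind of near-cancellation is exactly what one expects, and it is the standard mechanism behind almost analytic extensions.

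Concretely, I would first differentiate $\tilde f(z) = \sum_{k=0}^m f^{(k)}(x)\frac{(iy)^k}{k!}$ in $x$ term by term. Only $f^{(k)}(x)$ depends on $x$, so
\begin{equation*}
\partial_x \tilde f(z) = \sum_{k=0}^m f^{(k+1)}(x)\,\frac{(iy)^k}{k!}.
\end{equation*}
Then I would differentiate in $y$; only $(iy)^k$ depends on $y$, and $\partial_y(iy)^k = ik(iy)^{k-1}$, so after reindexing $j=k-1$,
\begin{equation*}
\partial_y \tilde f(z) = \sum_{k=1}^m f^{(k)}(x)\,\frac{ik(iy)^{k-1}}{k!}
 = i \sum_{j=0}^{m-1} f^{(j+1)}(x)\,\frac{(iy)^j}{j!}.
\end{equation*}

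Assembling $\bar\partial_z\tilde f = \partial_x\tilde f + i\partial_y\tilde f$, the $i\cdot i=-1$ in the second contribution causes the sums over $k=0,\ldots,m-1$ to cancel exactly against the corresponding terms in $\partial_x \tilde f$, and only the $k=m$ term survives:
\begin{equation*}
\bar\partial_z\tilde f(z) = f^{(m+1)}(x)\,\frac{(iy)^m}{m!}.
\end{equation*}
Taking absolute values yields $|\bar\partial_z\tilde f(z)|\le \frac{1}{m!}|f^{(m+1)}(x)|\,|y|^m$, in fact with equality, which is \eqref{pbf}.

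There is no real obstacle: the proof is an elementary index-shift computation. The only things to be careful about are (i) keeping track of the factor $i$ coming from $\partial_y(iy)^{k}$, and (ii) the reindexing that aligns the two sums so that the telescoping is manifest. I would also briefly verify that the same derivation is compatible with the sign/normalization convention used in Lemma~\ref{lem: 4.1}; since Lemma~\ref{lem: 4.1} takes $\bar\partial_z = \partial_x + i\partial_y$, the computation above is consistent and no factor of $1/2$ is incurred.
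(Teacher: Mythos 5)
Your proof is correct and matches the paper's argument: both compute $\bar\partial_z\tilde f=\partial_x\tilde f+i\partial_y\tilde f$ term by term, observe the telescoping cancellation, and arrive at $\bar\partial_z\tilde f(z)=\frac{1}{m!}f^{(m+1)}(x)(iy)^m$, from which the bound (in fact an equality in modulus) follows.
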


\begin{proof} Eq.~\eqref{fz: expan} follows from the straightforward computation. With $\bar \p_z =\p_x+i\p_y$, $\bar \p_z\tilde f(z)$ reads 
\begin{align}
    \bar \p_z \tilde f(z)=&\sum\limits_{k=0}^m f^{(k+1)}(x)\frac{(iy)^k}{k!}+ i\sum\limits_{k=1}^m f^{(k)}(x) \frac{i(iy)^{k-1}}{(k-1)!}\nonumber\\
    =& \frac{ 1}{m!} f^{(m+1)}(x)(iy)^m,\label{eq: 4.7}
\end{align}
which gives~\eqref{pbf}.\end{proof}
\begin{proof}[Proof of Proposition~\ref{prop: 3.1}] Recall that $X=B_r(x_0-y_0)$ and $Y=B_r(0)$, where $r=\epsilon/2$. Now, let $g(\lambda)=\chi_{X}T_{\xi^\lambda}^{-1} f(\lambda) T_{\xi^\lambda} \chi_Y$, where $f(\lambda):=U_{t,\xi^\lambda}$. We define 
\begin{equation}
    \tilde g(z)=\chi_X T_{\xi^{z}}^{-1} \tilde f(z) T_{\xi^{z}}\chi_Y,\label{def: tg}
\end{equation}
where $\tilde f(z)$ is the almost analytic extension of $f(\lambda)$ in $\lambda$ constructed in Eq.~\eqref{def: tUtxiz}.

To compute $\bar \p_z g(z)$, we note that, by Lemma~\ref{Lem: 4.2},
\begin{equation}
    \bar\p_z \tilde H_{\xi^z}= \frac{1}{m!} \p_\lambda^{m+1} H_{\xi^\lambda}(i\mu)^m.
\end{equation}
This and the Duhamel principle yield
\begin{align}
    \bar\p_z f(z)=&  \frac{-i}{m!}\int_0^t  e^{-i\tilde H_{\xi^z}(t-s)} \bar \p_z \tilde H_{\xi^z} e^{-i\tilde H_{\xi^z}s} ds\nonumber\\
    =& -i\int_0^t e^{-i\tilde H_{\xi^z}(t-s)}  \p_\lambda^{m+1} H_{\xi^\lambda} e^{-i\tilde H_{\xi^z}s}ds(i\mu)^m,\label{eq: fpz}
\end{align}

Since $\chi_X T_{\xi^{z}}^{-1}$ and $T_{\xi^{z}}\chi_Y$ are analytic in $z$, we have, by the Leibnitz rule and Eq.~\eqref{eq: fpz}, that 
\begin{align}
    \bar \p_{z} \tilde g(z)=&\chi_X T_{\xi^{z}}^{-1}\bar \partial_{z} f(z) T_{\xi^{z}}\chi_Y\nonumber\\
    =&\frac{ -i}{m!}\chi_X T_{\xi^{z}}^{-1} \int_0^t R_{t,s}(z)(i\mu)^mds,\label{eq: 3.9}
\end{align}
where, with $z=\lambda+i\mu$, 
\begin{equation}
    R_{t,u}(z):=\chi_X T_{\xi^{z}}^{-1}  e^{-i\tilde H_{\xi^z}(t-u)}  \p_\lambda^{m+1} H_{\xi^\lambda} e^{-i\tilde H_{\xi^z}u} T_{\xi^{z}}\chi_Y.\label{Rtu: def}
\end{equation}
\par Next, we claim that $\tilde g(z)$ is independent of $\re z$. Indeed, we have 
\begin{equation}
    \tilde g(z)=\chi_X T_{\xi^{z}}^{-1} T_{\eta}^{-1}T_{\eta} \tilde f(z) T_{\eta}^{-1}T_{\eta} T_{\xi^{z}}\chi_Y,\label{tgzeta1: eq}
\end{equation}
where $\eta=\alpha b\in \mathbb R^n$ with $\alpha\in \mathbb R$. Using that $T_{\eta}$ commutes with $\p_\xi$, we find 
\begin{equation}
    T_{\eta}\p_\lambda^k H_{\xi^\lambda}T_{\eta}^{-1}=\p_{\lambda}^k[ T_{\eta}H_{\xi^\lambda}T_{\eta}^{-1}]=\p_{\lambda}^kH_{\xi^{\lambda+a}},\qquad k=0,\cdots,m.
\end{equation}
This, together with Eqs.~\eqref{def: tildH},~\eqref{def: tUtxiz} and $\tilde f(z)=\tilde U_{t,\xi^z}$, yields that $T_{\eta}\tilde f(z)T_{\eta}^{-1}=\tilde f(z+\alpha)$. The last relation, together with~\eqref{tgzeta1: eq} and the group property $T_{\eta}T_{\xi^{z}}=T_{\xi^{z+\alpha}}$, implies 
\begin{equation}
    \tilde g(z)=\tilde g(z+\alpha), \qquad \forall\, \alpha\in \mathbb R,
\end{equation}
which shows that $\tilde g(z)$ is independent of $\re z$. Hence, Lemma~\ref{lem: 4.1} applies to $\tilde g(z)$ and yields
\begin{equation}
    g(\lambda)=\tilde g(\lambda)=\tilde g(z)+Rem,\label{g: tg+R}
\end{equation}
where, by~\eqref{eq: 3.9},  
\begin{align}
    Rem:=& i\int_0^\mu   (\bar\p_z \tilde g)(x+is)ds\nonumber\\
    =&\frac{1}{m!}\int_0^\mu\int_0^t R_{t,u}(\lambda+is)(is)^mduds.\label{rem}
\end{align}
Our next goal is to estimate this reminder.\par To estimate $ R_{t,u}(\lambda+is)$, we proceed as in Proposition~\ref{prop: E} to obtain
\begin{equation}
    \| e^{-i\tilde H_{\xi^z}t}\|\leq e^{\mu t \tilde c},\label{fm+10}
\end{equation}
where $\mu=\im z$ and the constant $\tilde c$ is defined in Eq.~\eqref{def: tc}. Next, from Eq.~\eqref{Rtu: def} we find, 
\begin{align}
    \|R_{t,u}(\lambda+is)\|\leq &\|\chi_XT_{\xi^{\lambda+is}}^{-1}\|\| e^{-i\tilde H_{\xi^{\lambda+is}}(t-u)}\| \| \p_\lambda^{m+1} H_{\xi^\lambda}\|\| e^{-i\tilde H_{\xi^z}u}\|\| T_{\xi^{\lambda+is}}\chi_Y\|.\label{eq: remark1}
\end{align}
Proceeding as~\eqref{YUtX: est}-\eqref{1.32}, using~\eqref{fm+10} and~\eqref{def: M} and assuming $d_{XY}-\epsilon-t\tilde c\geq 0$, we obtain 
\begin{equation}
   \|R_{t,u}(\lambda+is)\| \leq  e^{-s(d_{x_0y_0}-\epsilon-t\tilde c)}  M.\label{B.26}
\end{equation}
This together with Eqs.~\eqref{Rtu: def} and~\eqref{rem} and estimates~\eqref{B.26} and, for all $|x_0-y_0|-\epsilon-t\tilde c>0$ and $\mu\in (0,1)$, 
\begin{equation}
  \frac{1}{m!}  \int_0^\mu e^{-s(d_{x_0y_0}-\epsilon-t\tilde c)} s^mds\lesssim_\mu \frac{1}{(d_{x_0y_0}-\epsilon-t\tilde c)^{m+1}},
\end{equation}
yields
\begin{equation}
    \|Rem\|\lesssim_\mu \frac{tM}{(d_{x_0y_0}-\epsilon-t\tilde c)^{m+1}},\label{Rem: est}
\end{equation}
where $M$ is given by~\eqref{def: M}, and subsequently
\begin{equation}
    g(\lambda)=\tilde g(z)+Rem,
\end{equation}
with $Rem$ satisfying~\eqref{Rem: est}, yielding~\eqref{re: eq}-\eqref{est: Remmu}. \end{proof}

\begin{proof}[Proof of Theorem~\ref{thm:MVB-diff}] Let $X, Y$ and $b$ be the same as in Proposition~\ref{prop: 3.1}. We estimate the first term on the right-hand side of~\eqref{re: eq} as in the proof of Theorem~\ref{thm:MVB}. Similarly to \eqref{1.32}, we find
\begin{equation}
    \| \chi_X T_{\xi^z}^{-1} \tilde U_{t,\xi^z} T_{\xi^z} \chi_Y\|\leq \| \tilde U_{t,\xi^z}\| e^{-\mu'd_{x_0y_0}},\label{July23: eq1}
\end{equation}
with $\xi^z=zb+\xi^\stkout{\perp}$ \stkout{(see~\eqref{decom: xiz})}, $b=\frac{x_0-y_0}{|x_0-y_0|}$, $\tilde U_{t,\xi^z}$ defined in Eq.~\eqref{def: tUtxiz} and $\mu'=(1-\epsilon)\mu$. Next, applying~\eqref{fm+10} and setting $\tilde c'=\tilde c/(1-\epsilon)$, we conclude that
\begin{equation}
 \|\chi_X T_{\xi^z}^{-1} \tilde U_{t,\xi^z} T_{\xi^z} \chi_Y\|   \leq e^{-\mu'(d_{x_0y_0}-\tilde c't)}.\label{July23: eq4}
\end{equation}
Eq.~\eqref{re: eq}, together with estimate~\eqref{July23: eq4}, yields
\begin{equation}
    \| \chi_X U_t \chi_Y\|\lesssim_\mu e^{-(1-\epsilon)\mu (d_{x_0y_0}-\tilde c t)}+\frac{\langle t\rangle  M}{(d_{x_0y_0}-\epsilon-t\tilde c)^{m+1}}.\label{Qeq}
\end{equation}

We take $\mu=\frac{1}{2}$. Then, for $1< t< \frac{d_{x_0y_0}}{\tilde c'}, \, \tilde c'=\tilde c+2\epsilon>\tilde c$, we have 
\begin{equation}
    \| \chi_{X} e^{-itH}\chi_{Y}\|\lesssim  \frac{t M}{(d_{x_0y_0}-t\tilde c')^{m+1}}\label{A: MVB}.
\end{equation}
\par Now, we return to general compact sets $X$ and $Y$. Covering $X$ and $Y$ with balls $B_r(x_j)$, $j=1,\cdots,N_1$, and $B_r(y_k), k=1,\cdots,N_2$ and proceeding as in \eqref{cpt3}-\eqref{def: CXY} and using~\eqref{A: MVB}, we obtain
\begin{equation}
    \|\chi_X e^{-itH}\chi_Yg\|^2\leq C\sum\limits_{j=1}^{N_2}t^2M^2\|\tilde\chi_jg\|^2\tilde C_{XY},\qquad \forall\, g\in L^2(\mathbb R^n),
\end{equation}
where 
\begin{equation}
    \tilde C_{XY}:=\sum\limits_{k=1}^{N_1}\sum\limits_{l=1}^{N_2}  (d_{x_ky_j}-t\tilde c')^{-m-1}(d_{x_ky_l}-t\tilde c')^{-m-1}.
\end{equation}
Using the shell argument, as in~\eqref{shell}-\eqref{shell2}, we arrive at
\begin{equation}
     \|\chi_X e^{-itH}\chi_Yg\|^2\lesssim_n t^2M^2(d_{XY}-t\tilde c')^{-2m-2+2n}\|g\|^2,
\end{equation}
which yields~\eqref{MVB'}.\end{proof}

\section{{$N$}-particle dynamics}\label{sec: MVB-N}

\noindent For the $N$-particle problem, consider the quantum Hamiltonian for $N$ identical bosons 
\begin{equation}
    H_N:=\sum\limits_{j=1}^N (\omega_1(p_j)+v(x_j))+\frac{1}{2}\sum\limits_{i\neq j} w(x_i-x_j)\label{def: HN}
\end{equation}
on $L^2_{sym}(\mathbb R^{dN})$, where $L^2_{sym}(\mathbb R^{dN})$ is either bosonic space of symmetric functions of ferminoic one associated with certain representations of the symmetric group $S_N$ of permutations of $N$ indices (see e.g. \cite{GS}).

The operator $H_N$ is of the form~\eqref{H}, with
\begin{equation}
\omega(k)=\sum\limits_{j=1}^{N}\omega_1(k_j)\,\text{ and 
}\, V(x)=\sum\limits_{j=1}^N v(x_j)+\frac{1}{2}\sum\limits_{i\neq j} w(x_i-x_j),\label{def: N: omegaV}
\end{equation}
and $x=(x_1,\cdots,x_N)\in \mathbb R^{dN}$, $k=(k_1,\cdots,k_N)\in \mathbb R^{dN}$. Now, we define $\chi_X(x)$ as the characteristic function of the set $X^N:=\underbrace{X\times \cdots \times X}_{N\text{-fold product}}$:
\begin{equation}
\tilde\chi_X(x)\equiv \chi_{X^N}(x):=\prod\limits_{j=1}^{N}\chi_X(x_j).\label{Nfold}
\end{equation}
\begin{thm}\label{thm:MVB-N}Let $H_N$ be as in Eq.~\eqref{def: HN} and let $\omega_1$ and $V$ ( defined in Eq.~\eqref{def: N: omegaV}) satisfy Condition~\eqref{om-cond'} (with $n=d$). Let $X$ and $Y$ be two disjoint sets. Then, for any $\mu'\in (0,\mu)$, we have
\begin{equation}
    \| \tilde\chi_X e^{-itH_N}\tilde\chi_Y\|\leq C  e^{-\mu'N(d_{XY}-c_1't)},\label{goal: N}
\end{equation}
where $X$ is the multiplication operator by the function $\chi_X(x)$ in~\eqref{Nfold}, $C>0$ is a constant depending on $\epsilon=1-\frac{\mu'}{\mu}, d, N$ and $\mu$, with $c'_1=\frac{c_1}{1-\epsilon}$ and
\begin{equation}
    c_1:=\sup\limits_{\xi\in \mathbb R^d,\, b\in S^{d-1}} \frac{\im \omega_1(\xi+i\mu b)}{\mu}<\infty.
\end{equation}
\end{thm}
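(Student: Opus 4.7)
The plan is to adapt the proof of Theorem~\ref{thm:MVB} to the $N$-particle setting by exploiting the tensor structure of the cut-offs $\tilde\chi_X(x)=\prod_{l=1}^N\chi_X(x_l)$ and of the kinetic part $\sum_l\omega_1(p_l)$ of $H_N$, via a \emph{diagonal} complex deformation in $\mathbb{C}^{dN}$. A generic unit-vector deformation in $\mathbb{R}^{dN}$ (as used in Theorem~\ref{thm:MVB} with $n=dN$) would yield only a factor $\sqrt{N}$ in the exponent, since both the effective Euclidean distance $d_{X^N Y^N}$ and the corresponding constant $c$ scale as $\sqrt{N}$; the diagonal deformation is precisely what produces the factor $N$.

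Since $\omega_1$ satisfies \eqref{om-cond'} on $\mathcal S_a$, the total dispersion $\omega(k)=\sum_l\omega_1(k_l)$ extends analytically to the polystrip $\mathcal S_a^{dN}$; the potential $V$ in \eqref{def: N: omegaV} is a multiplication operator and so commutes with $T_\zeta$. Hence $H_N$ satisfies Condition~(A) with $n=dN$ and the analog of Lemma~\ref{lem: 2.3} applies. For the diagonal deformation $\zeta=i\mu(b,b,\ldots,b)\in\mathcal S_a^{dN}$ with $b\in S^{d-1}$ and $\mu\in(0,a)$, one has $\im H_{N,\zeta}=\sum_l\im\omega_1(p_l+i\mu b)\le N\mu c_1$, so the analog of Proposition~\ref{prop: E} gives $\|e^{-itH_{N,\zeta}}\|\le e^{N\mu c_1|t|}$. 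Moreover $T_\zeta=\prod_l e^{\mu b\cdot x_l}$, so $\|\tilde\chi_X T_\zeta^{-1}\|\le e^{-\mu N\inf_{z\in X}b\cdot z}$ and $\|T_\zeta\tilde\chi_Y\|\le e^{\mu N\sup_{z\in Y}b\cdot z}$---each gaining the factor $N$ from the product structure.

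For a pair of balls $X=B_r(x_0)$, $Y=B_r(y_0)$ with $r=\epsilon d_{XY}/2$ and $b=(x_0-y_0)/|x_0-y_0|$, one has $\inf_{z\in X} b\cdot z-\sup_{z\in Y} b\cdot z\ge(1-\epsilon)d_{XY}$, and combining the three bounds above yields the target estimate
\begin{equation*}
\|\tilde\chi_X e^{-itH_N}\tilde\chi_Y\|\le e^{-\mu N(1-\epsilon)d_{XY}}\,e^{N\mu c_1|t|}=e^{-\mu' N(d_{XY}-c_1'|t|)},
\end{equation*}
with $\mu'=\mu(1-\epsilon)$ and $c_1'=c_1/(1-\epsilon)$. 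For general disjoint $X,Y\subset\mathbb{R}^d$, I would cover $X$ and $Y$ by balls of radius $r$ in $\mathbb{R}^d$ (not in $\mathbb{R}^{dN}$) and lift to product partitions of unity $\tilde\chi_X=\sum_{\mathbf j}\chi_{\mathbf j}^2$ and $\tilde\chi_Y=\sum_{\mathbf k}\tilde\chi_{\mathbf k}$, with $\chi_{\mathbf j}(x)=\prod_l\chi_{j_l}(x_l)$ and $\tilde\chi_{\mathbf k}$ analogously. For each multi-index pair, a particle-wise deformation $\zeta=i\mu(b_1,\ldots,b_N)$ with $b_l=(x_{j_l}-y_{k_l})/|x_{j_l}-y_{k_l}|$ still satisfies $\im H_{N,\zeta}\le N\mu c_1$ and yields
\begin{equation*}
\|\chi_{\mathbf j}e^{-itH_N}\tilde\chi_{\mathbf k}\|\le e^{N\mu c_1 |t|}\prod_{l=1}^N e^{-\mu(|x_{j_l}-y_{k_l}|-2r)}.
\end{equation*}

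Following the argument in \eqref{cpt3}--\eqref{last: eq2}, the resulting multi-index sums factorize into products $\prod_l\sum_{k_l}e^{-\mu|x_{j_l}-y_{k_l}|}$ over the $N$ particles, and each factor is controlled by the shell decomposition \eqref{shell}--\eqref{shell2} by a constant independent of $d_{XY}$. The resulting $N$-th power is absorbed into the $N$-dependent constant $C=C(\epsilon,d,N,\mu)$ allowed in the statement. The main obstacle is managing these multi-index sums without losing the factor $N$ in the exponent; this works because the shell bound gives constants independent of $d_{XY}$, so its $N$-th power only affects the prefactor, while the exponential decay rate inherits the factor $N$ from the diagonal deformation.
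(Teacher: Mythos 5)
Your proposal is essentially the paper's own proof: it uses the same particle-wise complex deformation $\zeta=(\xi_1+i\mu b_1,\dots,\xi_N+i\mu b_N)$ with $b_l$ chosen to point from the $l$-th $Y$-ball centre to the $l$-th $X$-ball centre, the same estimate $\im H_{N,\zeta}\le N\mu c_1$ giving $\|e^{-itH_{N,\zeta}}\|\le e^{N\mu c_1|t|}$, the same product partitions of unity in each variable $x_l\in\mathbb R^d$, and the same per-particle shell summation to absorb the combinatorics into an $N$-dependent prefactor. Your remark that a generic (non-product) deformation in $\mathbb R^{dN}$ would only yield a $\sqrt N$ in the exponent is a nice explicit justification of why the tensor-structured deformation is used, which the paper leaves implicit, but the underlying argument is the same.
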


Eq.~\eqref{goal: N} is one of the simplest many-body estimates. The more refined and more difficult estimate to prove would be one with $\chi_X(x)$ and $\chi_Y(y)$ replaced in~\eqref{goal: N} by $\chi_X^{(k)}(x)\equiv \chi_{X^k}(x)=\prod\limits_{j=1}^k \chi_X(x_j)$ and $\chi_Y^{(l)}(y)\equiv \chi_{Y^l}(y)=\prod\limits_{j=1}^l \chi_Y(y_j),$ with $1\leq k\leq l<N.$
\begin{proof}[Proof of Theorem~\ref{thm:MVB-N}] We follow the proof of Theorem~\ref{thm:MVB}, with the following modifications:
\begin{equation}
\xi=(\xi_1,\cdots,\xi_N)\in \mathbb R^{dN},\, \xi_j\in \mathbb R^d,\, b=(b_1,\cdots,b_N), b_j\in S^{n-1}, \,\forall\, j, 
\end{equation}
\begin{equation}
\zeta=(\zeta_1,\cdots,\zeta_N)\in \mathbb R^{dN},\, \zeta_j=\xi_j+i\mu b_j,\,\mu\in (0,a),\, b\cdot x=\sum\limits_{j=1}^N b_j\cdot x_j.
\end{equation}
As in the proof of Theorem~\ref{thm:MVB}, we reduce proving Eq.~\eqref{goal: N} for disjoint sets $X$ and $Y$ to Eq.~\eqref{goal: N} for the polyballs $X=\prod\limits_{j=1}^NX_j,\, X_j:=B_r(x_{0j}-y_{0j})$ and $Y=\prod\limits_{j=1}^N Y_0, \, Y_0:=B_r(0)$, with $r=\frac{\epsilon}{2}d_{XY}$, with $x_{0j}\in X$ and $y_{0j}\in Y,\, \forall \, j$, and $\epsilon\in (0,1)$.\par Similarly to~\eqref{eq: YUX} and~\eqref{est: Utxi: zeta}, we have 
\begin{equation}
\|\tilde \chi_X e^{-iH_Nt}\tilde\chi_Y\|\leq e^{-\mu (r_X-r_Y)}e^{\mu tc},\label{thm1.2: eq20}
\end{equation}
with 
\begin{align}
c=&\frac{1}{\mu}\sup\limits_{b\in \bigotimes\limits_{1}^NS^{d-1} }\sup\limits_{\xi\in \mathbb R^{dN}}\im \sum\limits_{j=1}^N\omega(\xi_j+i\mu b_j),
\end{align}
\begin{equation}\label{def: rYrX0}
    r_Y=\sup\limits_{y\in Y}b\cdot y , \quad r_X=\inf\limits_{x\in X} b\cdot x.
\end{equation}
Let $b=(b_1,\cdots,b_N)$, with $b_j=\frac{x_{0j}-y_{0j}}{|x_{0j}-y_{0j}|}, j=1,\cdots,N$. We claim that
    \begin{equation}
        r_X-r_Y\geq  (1-\epsilon)\sum\limits_{j=1}^N|x_{0j}-y_{0j}|\geq (1-\epsilon)Nd_{XY}. \label{thm1.2: eq10}
    \end{equation}
Indeed, by the definitions of $X$ and $Y$ and Eq.~\eqref{def: rYrX0}, we have
\begin{equation}
    r_Y\leq \sum\limits_{j=1}^N |y_j|\leq \frac{\epsilon N}{2}d_{XY}
\end{equation}
and 
\begin{align}
    r_X=& \sum\limits_{j=1}^N b\cdot (x_{0j}-y_{0j}) +\inf\limits_{x\in X} b\cdot (x-(x_{0j}-y_{0j}))\nonumber\\
    \geq & \sum\limits_{j=1}^N (|x_{0j}-y_{0j}|-\frac{\epsilon}{2}d_{XY}).
\end{align}
Hence,
\begin{align}
    r_X-r_Y\geq \sum\limits_{j=1}^N (|x_{0j}-y_{0j}|-\epsilon d_{XY})\geq (1-\epsilon)\sum\limits_{j=1}^N |x_{0j}-y_{0j}|,
\end{align}
as claimed.

Furthermore, we compute
\begin{align}
c=&\frac{1}{\mu}\sum\limits_{j=1}^N \sup\limits_{\xi_j\in \mathbb R^d,\, b_j\in S^{d-1}} \im \omega_1(\xi_j+i\mu b_j)= Nc_1,\label{def: N: c0}
\end{align}
where, recall, 
\begin{equation}
c_1:=\sup\limits_{\lambda \in \mathbb R^n,\, b\in S^{n-1}}\frac{\omega_1(\lambda +i\mu b)}{\mu}.
\end{equation}\par

Combining~\eqref{thm1.2: eq20},~\eqref{thm1.2: eq10} and~\eqref{def: N: c0}, we arrive at 
\begin{equation}
    \| \tilde\chi_X e^{-itH_N}\tilde\chi_Y\|\leq e^{-\mu (1-\epsilon)N(d_{XY}-c'_1t)},
\end{equation}
where $c'_1=\frac{c_1}{1-\epsilon}$, which implies, with $\mu''=(1-\epsilon)\mu$, 
\begin{equation}
     \| \tilde\chi_X e^{-itH_N}\tilde\chi_Y\|\leq e^{-\mu''N(d_{XY}-c'_1t)},\label{rev: eq10}
\end{equation}
with $X=\prod\limits_{j=1}^NX_j,\, X_j:=B_r(x_{0j})$ and $Y=\prod\limits_{j=1}^N Y_0, \, Y_0:=B_r(y_{0j})$, where $r=\frac{\epsilon}{2}d_{XY}$, for some $x_0\neq y_0$, with $x_{0j}\in X$ and $y_{0j}\in Y,\, \forall \, j$, and $\epsilon\in (0,1)$. Proceeding as in~\eqref{cpt3}-\eqref{cpt23} in each variable $x_j\in \mathbb R^d$, $j=1,\cdots,N$, we arrive at~\eqref{MVB} by taking $\mu'=\mu(1-2\epsilon)$ for any $\epsilon\in (0,1)$, and the constant $C$ in~\eqref{goal: N} depends on $\epsilon=1-\frac{\mu'}{\mu}, d, N$ and $\mu$.\end{proof}

\paragraph{\bf Acknowledgment}

The first author is grateful to Jeremy Faupin, Marius Lemm and Jingxuan Zhang, and both authors, to Avy Soffer, for enjoyable and fruitful collaborations. The authors are grateful to the anonymous referees for useful suggestions, for pointing out reference \cite{BD}. 
  The research of I.M.S. is supported in part by NSERC Grant NA7901.
X. W. is supported by 2022 \emph{Australian Laureate Fellowships} grant FL220100072. Her research was also supported in part by NSERC Grant NA7901.  
Parts of this work were done while the second author was at the Fields Institute  for Research in Mathematical Sciences, Toronto.

\medskip

\paragraph{\bf Declarations}
\begin{itemize}
	\item Conflict of interest: The Authors have no conflicts of interest to declare that are relevant to the content of this article.
	\item Data availability: Data sharing is not applicable to this article as no datasets were generated or analysed during the current study.
\end{itemize}

\bigskip

{
\paragraph{\bf Note added in proofs}

M. Lemm observed to us that our approach in the proof of Theorem 1.1 is related to the Combes-Thomas argument. (See e.g. M. Aizenman and S. Warzel, Random Operators. AMS Press 2015) We are grateful to to M. Lemm for this communication and illuminating discussions.
}

\bigskip

\begin{bibdiv}
	\begin{biblist}
            \bib{AFPS}{article}{
                author={Arbunich, J.},
			author={Faupin, J.},
                author={Pusateri, F.},
			author={Sigal, I.~M.},
			title={Maximal speed of quantum propagation for the {H}artree
              equation},
			date={2023},
			journal={Comm. Partial Differential Equations},
			volume={48},
			number={4},

            }

		\bib{APSS}{article}{
			author={Arbunich, J.},
			author={Pusateri, F.},
			author={Sigal, I.~M.},
			author={Soffer, A.},
			title={Maximal speed of quantum propagation},
			date={2021},
			journal={Lett. Math. Phys.},
			volume={111},
			pages={1\ndash 16},
		}

            \bib{BD}{article}{
                author={Bachmann, S.},
			author={De Nittis, G.},
			title={Lieb-{R}obinson bounds in the continuum via localized frames},
			date={2024},
			journal={Preprint, arXiv 2406.15670},

            }

		\bib{BoFauSig}{article}{
			author={Bony, J.-F.},
			author={Faupin, J.},
			author={Sigal, I.~M.},
			title={Maximal velocity of photons in non-relativistic {QED}},
			date={2012},
			journal={Adv. Math},
			volume={231},
                number={5},
			pages={3054\ndash 3078},
		}
		
		\bib{Bose}{article}{
			author={Bose, S.},
			title={Quantum communication through spin chain dynamics: an
				introductory overview},
			date={2007},
			journal={Contemporary Physics},
			volume={48},
                number={1},
			pages={13\ndash 30},
		}

		\bib{BH}{article}{
			author={Bravyi, S.},
			author={Hastings, M.~B.},
			title={Topological quantum order: Stability under local perturbations},
			date={2010},
			journal={J. Math. Phys.},
			volume={51},
                number={9},
		}
		
		\bib{BHV}{article}{
			author={Bravyi, S.},
			author={Hastings, M.~B.},
			author={Verstraete, F.},
			title={{L}ieb-{R}obinson bounds and the generation of correlations and
				topological quantum order},
			date={2006},
			journal={Phys. Rev. Lett.},
			volume={97},
                number={5},
			pages={050401},
		}
		
		\bib{Breteaux_2022}{incollection}{
			author={Breteaux, S.},
			author={Faupin, J.},
			author={Lemm, M.},
			author={Sigal, I.~M.},
			title={Maximal speed of propagation in open quantum systems},
			date={2022},
			booktitle={The {P}hysics and {M}athematics of {E}lliott {L}ieb},
			publisher={{EMS} Press},
			pages={109\ndash 130},
		}

            \bib{Breteaux_2023}{article}{
			author={Breteaux, S.},
			author={Faupin, J.},
			author={Lemm, M.},
                author={Ouyang, D.},
			author={Sigal, I.~M.},
                author={Zhang, J.},
                title={Light cones for open quantum systems in the continuum},
                journal={Reviews in Mathematical Physics},
			date={2024},
		}
		
		\bib{CJWW}{article}{
			author={Cedzich, C.},
			author={Joye, A.},
			author={Werner, A.H.},
			author={Werner, R.F.},
			title={Exponential tail estimates for quantum lattice dynamics},
			date={2024},
			journal={Preprint, arXiv 2408.02108v1},
		}

		\bib{CL}{article}{
			author={Chen, C.-F.},
			author={Lucas, A.},
			title={Finite speed of quantum scrambling with long range interactions},
			date={2019},
			journal={Phys. Rev. Lett.},
			volume={123},
                number={25},
			pages={250605},
		}
		
		\bib{DefenuEtAl}{article}{
			author={Defenu, N.},
			author={Donner, T.},
			author={Macrì, T.},
			author={Pagano, G.},
			author={Ruffo, S.},
			author={Trombettoni, A.},
			title={Long-range interacting quantum systems},
						journal={Rev. Mod. Phys.},
     
			date={2023},
                volume={95},
                number={3},
                pages={035002},
		}

\bib{DHS}{article}{
			author={DeBievre, S.},
			author={Hislop, P.},
			author={Sigal, I. M.},
			title={Scattering theory for the wave equation on non-compact manifolds},
			date={1992},
			journal={Rev. Math. Phys.},
			volume={4},
                number={04},
			pages={575\ndash 618},
		}

	\bib{Der}{article}{
			author={Derezi\'nski, J.},
			title={Asymptotic completeness of long-range $N$-body quantum systems},
			date={1993},
			journal={Ann. Math.},
			volume={138},
                number={04},
			pages={427\ndash 476},
		}  
		
		\bib{DerGer}{article}{
			author={Derezi\'nski, J.},
			author={G\'erard, C.},
			title={Asymptotic completeness in quantum in field theory: massive
				{Pauli--Fierz} hamiltonians},
			date={1999},
			journal={Rev. Math. Phys.},
			volume={11},
                number={04},
			pages={383\ndash 450},
		}

           \bib{DerGer2}{book}{
			author={Derezi\'nski, J.},
			author={G\'erard, C.},
			title={Scattering theory of classical and quantum {$N$}-particle systems},
			publisher={Springer Science \& Business Media},
			date={2013},
               
		}

		\bib{EisOsb}{article}{
			author={Eisert, J.},
			author={Osborne, T.~J.},
			title={General entanglement scaling laws from time evolution},
			date={2006},
			journal={Phys. Rev. Lett.},
			volume={97},
                number={15},
			pages={150404},
		}

		\bib{EldredgeEtAl}{article}{
			author={Eldredge, Z.},
			author={Gong, Z.-X.},
			author={Young, J.~T},
			author={Moosavian, A.~H.},
			author={Foss-Feig, M.},
			author={Gorshkov, A.~V.},
			title={Fast quantum state transfer and entanglement renormalization
				using long-range interactions},
			date={2017},
			journal={Phys. Rev. Lett.},
			volume={119},
			number={17},
			pages={170503},
		}

            \bib{ElMaNayakYao}{article}{
			author={Else, D.~V.},
			author={Machado, F.},
			author={Nayak, Ch.},
			author={Yao, N.~Y.},
			title={Improved {L}ieb-{R}obinson bound for many-body {H}amiltonians with power-law interactions},
			date={2020},
			journal={Phys. Rev. A},
			volume={101},
			number={2},
			pages={022333},
		}

		\bib{EpWh}{article}{
			author={Epstein, J.~M.},
			author={Whaley, K.~B.},
			title={Quantum speed limits for quantum-information-processing task},
			date={2017},
			journal={Phys. Rev. A},
			volume={95},
                number={4},
			pages={042314},
		}
          \bib{FS0}{article}{
			author={Faupin, J.},
			author={Sigal, I.~M.},
			title={Minimal photon velocity bounds in non-relativistic quantum electrodynamics},
			date={2014},
			journal={Journal of Statistical Physics},
			volume={154},
			pages={58\ndash 90},
		}

		\bib{FLS1}{article}{
			author={Faupin, J.},
			author={Lemm, M.},
			author={Sigal, I.~M.},
			title={Maximal speed for macroscopic particle transport in the
				{B}ose-{H}ubbard model},
			date={2022},
			journal={Phys. Rev. Lett.},
			volume={128},
                number={15},
			pages={150602},
		}
		
		\bib{FLS2}{article}{
			author={Faupin, J.},
			author={Lemm, M.},
			author={Sigal, I.~M.},
			title={On {L}ieb-{R}obinson bounds for the {B}ose-{H}ubbard model},
			date={2022},
			journal={Commun. Math. Phys.},
			volume={394},
                number={3},
			pages={1011\ndash 1037},
		}
{
            \bib{FaupSiga}{article}{
			author={Faupin, J.},
			author={Sigal, I.~M.},
			title={Minimal photon velocity bounds in non-relativistic quantum electrodynamics},
			date={2014},
			journal={Journal of Statistical Physics},
			volume={154},
			pages={58\ndash 90},
		}

}

           \bib{FaupSig}{article}{
			author={Faupin, J.},
			author={Sigal, I.~M.},
			title={On {R}ayleigh scattering in non-relativistic quantum electrodynamics},
			date={2014},
			journal={Commun. Math. Phys.},
			volume={328},
			pages={1199\ndash 1254},
		}

            \bib{Fossetal}{article}{
			author={Foss-Feig, M.},
			author={Gong, Z.},
			author={Clark, C.~W.},
                author={Gorshkov, A.~V.},
			title={Nearly-linear light cones in long-range interacting quantum systems},
			date={2015},
			journal={Phys. Rev. Lett.},
			volume={114},
                number={15},
			pages={157201},
		}

		\bib{FrGrSchl}{article}{
			author={Fr\"ohlich, J.},
			author={Griesemer, M.},
			author={Schlein, B.},
			title={Asymptotic completeness for {R}ayleigh scattering},
			date={2002},
			journal={Ann. Henri Poincar\'e},
			volume={3},
			pages={107\ndash 170},
		}

		\bib{FrGrSchl2}{article}{
			author={Fr\"ohlich, J.},
			author={Griesemer, M.},
			author={Schlein, B.},
			title={Asymptotic completeness for {C}ompton scattering},
			date={2004},
			journal={Commun.Math. Phys.},
			volume={3},
			pages={252\ndash 476},
		}

            \bib{GebNachReSims}{article}{
			author={Gebert, M.},
			author={Nachtergaele, B.},
                author={Reschke, J.},
                author={Sims, R.},
			title={{L}ieb-{R}obinson bounds and strongly continuous dynamics for a class of many-body fermion systems in {$\R^d$}},
			date={2020},
			journal={Ann. Henri Poincar\'e},
			volume={21},
                number={11},
			pages={3609\ndash 3637},
		}

\bib{Ger}{article}{
						author={G\'erard, C.},
			title={Sharp Propagation Estimates for N-Particle Systems},
			date={1992},
			journal={Duke Math. J.},
			volume={67},
                number={04},
			pages={483\ndash 515},
		}

		\bib{GEi}{article}{
			author={Gogolin, C.},
			author={Eisert, J.},
			title={Equilibration, thermalisation, and the emergence of statistical
				mechanics in closed quantum systems},
			date={2016},
			journal={Rep. Prog. Phys.},
			volume={79},
                number={5},
			pages={056001},
			}

\bib{Gries}{article}{
						author={Griesemer, M.},
			title={N-body quantum systems with singular interactions},
			date={1992},
			journal={Ann. Inst. H. Poincar\'e},
			volume={69},
                number={04},
			pages={135\ndash 187},
		}

		\bib{GS}{book}{
			author={Gustafson, S.~J.},
			author={Sigal, I.~M.},
			title={Mathematical {C}oncepts of {Q}uantum {M}echanics},
			edition={$3$-rd ed.},
			series={Universitext},
			publisher={Springer},
			date={2020},
			ISBN={978-3-030-59561-6; 978-3-030-59562-3},
			url={https://doi.org/10.1007/978-3-030-59562-3},
		}

		\bib{H1}{article}{
			author={Hastings, M.~B.},
			title={{{L}ieb-Schultz-Mattis in higher dimensions}},
			date={2004},
			journal={Phys. Rev. B},
			volume={69},
                number={10},
			pages={104431},
		}
		
		\bib{H0}{article}{
			author={Hastings, M.~B.},
			title={{Locality in quantum and Markov dynamics on lattices and
					networks}},
			date={2004},
			journal={Phys. Rev. Lett.},
			volume={93},
                number={14},
			pages={140402},
		}
		
		\bib{H2}{article}{
			author={Hastings, M.~B.},
			title={An area law for one-dimensional quantum systems},
			date={2007},
			journal={J. Stat. Mech. Theory Exp.},
			number={8},
			pages={P08024, 14},
			url={https://doi.org/10.1088/1742-5468/2007/08/p08024},
		}
		
		\bib{H3}{article}{
			author={Hastings, M.~B.},
			title={Quantum belief propagation, an algorithm for thermal quantum
				systems},
			date={2007},
			journal={Phys. Rev. B},
			volume={76},
                number={20},
			pages={201102(R)},
		}
		
		\bib{HastKoma}{article}{
			author={Hastings, M.~B.},
			author={Koma, T.},
			title={Spectral gap and exponential decay of correlations},
			date={2006},
			journal={Commun. Math. Phys.},
			volume={265},
			pages={781\ndash 804},
		}

		\bib{HeSk}{article}{
			author={Herbst, I.},
			author={Skibsted, E.},
			title={Free channel {F}ourier transform in the long-range $n$-body
				problem.},
			date={1995},
			journal={J. d'Analyse Math},
			volume={65},
                number={1},
			pages={297\ndash 332},
		}

	{
           \bib{HunSig2}{article}{
			author={Hunziker, W.},
			author={Sigal, I.~M.},
			title={The quantum {$N$}-body problem},
			date={2000},
			journal={J. Math. Phys.},
			volume={41},
                number={6},
			pages={3448\ndash 3510},
		}

           \bib{HunSigSof}{article}{
			author={Hunziker, W.},
			author={Sigal, I.~M.},
                author={Soffer, A.},
			title={Minimal escape velocities},
			date={1999},
			journal={Comm. Partial Differential Equations},
			volume={24},
                number={11-12},
			pages={2279\ndash 2295},
		}

}

		\bib{KGE}{book}{
			author={Kliesch, M.},
			author={Gogolin, C.},
			author={Eisert, J.},
			title={{L}ieb-{R}obinson bounds and the simulation of time-evolution of
				local observables in lattice systems},
			publisher={In Many-Electron Approaches in Physics},
			address={Chemistry and Mathematics, 301-318. Springer},
			date={2014},
		}

           \bib{KuwLem2024}{article}{
			author={Kuwahara, T.},
			author={Lemm. M.},
			title={Enhanced Lieb-Robinson bounds for a class of Bose-Hubbard type Hamiltonians},
			date={2024},
			journal={Preprint, arXiv 2405.04672},
		}

		\bib{KuwSaito1}{article}{
			author={Kuwahara, T.},
			author={Saito, K.},
			title={Absence of fast scrambling in thermodynamically stable long-range
				interacting systems},
			date={2021},
			journal={Phys. Rev. Lett.},
			volume={126},
                number={3},
			pages={030604},
		}
		
		\bib{KVS}{article}{
			author={Kuwahara, T.},
			author={Vu, T.~V.},
			author={Saito, K.},
			title={Optimal light cone and digital quantum simulation of interacting
				bosons},
			date={2022},
			journal={Preprint, arXiv 2206.14736},
		}

		\bib{KS1}{article}{
			author={Kuwahara, T.},
			author={Saito, K.},
			title={Strictly linear light cones in long-range interacting systems of
				arbitrary dimensions},
			date={2020Jul},
			journal={Phys. Rev. X},
			volume={10},
                number={3},
			pages={031010},
			url={https://link.aps.org/doi/10.1103/PhysRevX.10.031010},
		}
		
		\bib{KuwSaito}{article}{
			author={Kuwahara, T.},
			author={Saito, K.},
			title={{L}ieb-{R}obinson bound and almost-linear light cone in
				interacting boson systems},
			date={2021},
			journal={Phys. Rev. Lett.},
			volume={127},
                number={7},
			pages={070403},
			url={https://link.aps.org/doi/10.1103/PhysRevLett.127.070403},
		}

           \bib{LRSZ}{article}{
			author={{L}emm, M.},
			author={Rubiliani, C.},
                author={Sigal, I.~M.},
                author={Zhang, J.},
			title={Information propagation in long-range quantum many-body systems},
			date={2023},
			journal={Phys. Rev. A},
			volume={108},
                number={6},
			pages={L060401},
		}

		\bib{LR}{article}{
			author={{L}ieb, E.~H.},
			author={Robinson, D.~W.},
			title={The finite group velocity of quantum spin systems},
			date={1972},
			ISSN={0010-3616},
			journal={Commun. Math. Phys.},
			volume={28},
                number={3},
			pages={251\ndash 257},
			url={http://projecteuclid.org/euclid.cmp/1103858407},
		}

		\bib{MatKoNaka}{article}{
			author={Matsuta, T.},
			author={Koma, T.},
			author={Nakamura, T.},
			title={Improving the {L}ieb-{R}obinson bound for long-range interactions},
			date={2017},
			journal={Ann. Henri Poincar\'e},
			volume={18},
			pages={519\ndash 528},
		}

		\bib{NachOgS}{article}{
			author={Nachtergaele, B.},
			author={Ogata, Y.},
			author={Sims, R.},
			title={Propagation of correlations in quantum lattice systems},
			date={2006},
			journal={J. Stat. Phys.},
			volume={124},
			pages={1\ndash 13},
		}
		
		\bib{NachSchlSSZ}{article}{
			author={Nachtergaele, B.},
			author={Schlein, B.},
			author={Sims, R.},
			author={Starr, S.},
			author={Zagrebnov, V.},
			title={On the existence of the dynamics for anharmonic quantum
				oscillator systems},
			date={2010},
			journal={Rev. Math. Phys.},
			volume={22},
                number={02},
			pages={207\ndash 231},
		}
		
		\bib{NachS}{article}{
			author={Nachtergaele, B.},
			author={Sims, R.},
			title={{L}ieb-{R}obinson bounds and the exponential clustering theorem},
			date={2006},
			journal={Commun. Math. Phys.},
			volume={265},
			pages={119\ndash 130},
		}
		
		\bib{NSY2}{article}{
			author={Nachtergaele, B.},
			author={Sims, R.},
			author={Young, A.},
			title={Quasi-locality bounds for quantum lattice systems. i.
				{L}ieb-{R}obinson bounds, quasi-local maps, and spectral flow automorphisms},
			date={2019},
			journal={J. Math. Phys.},
			volume={60},
                number={6},
			pages={061101},
		}
	
		\bib{NachVerZ}{article}{
			author={Nachtergaele, B.},
			author={Vershynina, A.},
			author={Zagrebnov, V.},
			title={{L}ieb-{R}obinson bounds and existence of the thermodynamic limit
				for a class of irreversible quantum dynamics},
			date={2011},
			journal={Contemp. Math},
			volume={552},
			pages={161\ndash 175},
		}
		
		\bib{NRSS}{article}{
			author={Nachtergaele, B.},
			author={Raz, H.},
			author={Schlein, B.},
			author={Sims, R.},
			title={Lieb-{R}obinson bounds for harmonic and anharmonic lattice
				systems},
			date={2009},
			ISSN={0010-3616},
			journal={Comm. Math. Phys.},
			volume={286},
			number={3},
			pages={1073\ndash 1098},
			url={https://doi.org/10.1007/s00220-008-0630-2},
			review={\MR{2472026}},
		}
           \bib{NachSim}{article}{
			author={Nachtergaele, B.},
			author={Sims, R.},
			title={Much ado about something: {W}hy {L}ieb-{R}obinson bounds are useful},
			date={2010},
			journal={Preprint, arXiv 1102.0835 },
		}

		\bib{Pou}{article}{
			author={Poulin, D.},
			title={{L}ieb-{R}obinson bound and locality for general {M}arkovian
				quantum dynamics},
			date={2010},
			journal={Phys. Rev. Lett.},
			volume={104},
                number={19},
			pages={190401},
		}

         \bib{RS4}{book}{
			author={Reed, M.},
			author={Simon, B.},
			title={Methods of modern mathematical physics. {IV}. {A}nalysis of
              operators},
			publisher={Academic Press, Inc. [Harcourt Brace Jovanovich, Publishers],
              New York},
                date={1978},
                ISBN = {0-12-585004-2},
		}

		\bib{RS}{article}{
			author={Roberts, D.~A.},
			author={Swingle, B.},
			title={{L}ieb-{R}obinson bound and the butterfly effect in quantum field
				theories},
			date={2016},
			journal={Phys. Rev. Lett.},
			volume={117},
                number={9},
			pages={091602},
		}

            \bib{SchSurvey}{article}{
			author={Schlag, W.},
			title={On pointwise decay of waves},
			date={2021},
			journal={J. Math. Phys.},
			volume={62},
                number={6},
			pages={061509},
		}

		\bib{SHOE}{article}{
			author={Schuch, N.},
			author={Harrison, S.~K.},
			author={Osborne, T.~J.},
			author={Eisert, J.},
			title={Information propagation for interacting-particle systems},
			date={2011},
			journal={Phys. Rev. A},
			volume={84},
                number={3},
			pages={032309},
		}

            \bib{Sig}{article}{
			author={Sigal, I.~M.},
			title={On long range scattering},
			date={1990},
			journal={Duke Math. J.},
			volume={60},
			pages={473\ndash 496},
		}

		\bib{SigSof2}{article}{
			author={Sigal, I.~M.},
			author={Soffer, A.},
			title={Long-range many-body scattering},
			date={1990},
			journal={Invent. Math},
			volume={99},
                number={1},
			pages={115\ndash 143},
		}
  
            \bib{SZ}{article}{
			author={Sigal, I.~M.},
			author={Zhang, J.},
			title={On propagation of information in quantum many-body systems},
			date={2023},
			journal={Preprint, arXiv 2212.14472},
		}

		\bib{Skib}{article}{
			author={Skibsted, E.},
			title={Propagation estimates for $N$-body {S}chr\"odinger operators},
			date={1991},
			journal={Commun. Math. Phys.},
			volume={142},
                number={1},
			pages={67\ndash 98},
		}
		
		\bib{TranEtAl3}{article}{
			author={Tran, M.~C.},
			author={Su, Yuan},
			author={Carney, D.},
			author={Taylor, J.~M.},
			title={Faster digital quantum simulation by symmetry protection},
			date={2021},
			journal={PRX QUANTUM},
			volume={2},
                number={1},
			pages={010323},
		}
	
		\bib{TEtal4}{article}{
			author={Tran, M.~C.},
			author={Chen, C.-F.},
			author={Ehrenberg, A.},
			author={Guo, A.~Y.},
			author={Deshpande, A.},
			author={Hong, Y.},
			author={Gong, Z.-X.},
			author={Gorshkov, A.~V.},
			author={Lucas, A.},
			title={Hierarchy of linear light cones with long-range interactions},
			date={2020},
			journal={Phys. Rev. X},
			volume={10},
			number={3},
			pages={031009},
		}
		
		\bib{TranEtAl1}{article}{
			author={Tran, M.~C.},
			author={Guo, A.~Y.},
			author={Baldwin, C.~L.},
			author={Ehrenberg, A.},
			author={Gorshkov, A.~V.},
			author={Lucas, A.},
			title={{L}ieb-{R}obinson light cone for power-law interactions},
			date={2021},
			journal={Phys. Rev. Lett.},
			volume={127},
                number={16},
			pages={160401},
			url={https://link.aps.org/doi/10.1103/PhysRevLett.127.160401},
		}
		
		\bib{TranEtal5}{article}{
			author={Tran, M.~C.},
			author={Guo, A.~Y.},
			author={Deshpande, A.},
			author={Lucas, A.},
			author={Gorshkov, A.~V.},
			title={Optimal state transfer and entanglement generation in power-law
				interacting systems},
			date={2021},
			journal={Phys. Rev. X},
			volume={11},
			number={3},
			pages={031016},
		}

            \bib{WH}{article}{
			author={Wang, Z.},
			author={Hazzard, K.~R.},
			title={Tightening the {L}ieb-{R}obinson {B}ound in {L}ocally {I}nteracting {S}ystems},
			date={2020},
			journal={PRX Quantum},
			volume={1},
                number={1},
			pages={010303},
		}

		\bib{YL}{article}{
			author={Yin, C.},
			author={Lucas, A.},
			title={Finite speed of quantum information in models of interacting
				bosons at finite density},
			date={2022},
			journal={Phys. Rev. X},
			volume={12},
                number={2},
			pages={021039},
			url={https://link.aps.org/doi/10.1103/PhysRevX.12.021039},
		}
		
	\end{biblist}
\end{bibdiv}

	\end{document}